\newcommand{\remove}[1]{}
\newtheorem{thm}{Theorem}[section]  
\newtheorem{lem}[thm]{Lemma}
\newtheorem{define}[thm]{Definition}
\newtheorem{cor}[thm]{Corollary}
\newtheorem{obs}[thm]{Observation}
\newtheorem{remark}[thm]{Remark}
\def\F{{\mathbb{F}}}
\def\N{{\mathbb{N}}}
\def\cS{{\mathcal S}}
\def\cM{\mathcal M}
\def\cH{\mathcal H}
\def\ba{{\mathbf a}}
\def\bb{{\mathbf b}}
\def\bc{{\mathbf c}}
\def\be{{\mathbf e}}
\def\bv{{\mathbf v}}
\def\bx{{\mathbf x}}
\def\by{{\mathbf y}}
\def\prob{{\mathbf{Pr}}}
\def\_{\,\,\,\,\,}
\def\st{\text{ s.t. }}
\def\then{\Rightarrow}
\def\D{{\partial}}
\def\poly{\textsf{poly}}
\def\char{\textsf{char}}
\newcommand{\eps}{\epsilon}
\newcommand{\veps}{\varepsilon}
\newcommand{\ds}{\displaystyle}
\newcommand{\ignore}[1]{}
\begin{document}

\title{Testing Equivalence of Polynomials under Shifts}

\author{Zeev Dvir\thanks{Department of Computer Science and Department of Mathematics,
Princeton University.
Email: \texttt{zeev.dvir@gmail.com}. Research  supported by NSF grants CCF-1217416
and CCF-0832797.} \and
Rafael Oliveira\thanks{Department of Computer Science, Princeton University.
Email: \texttt{rmo@cs.princeton.edu}.} \and
Amir Shpilka\thanks{Department of Computer Science, Technion --- Israel Institute of Technology, Haifa, Israel,
\texttt{shpilka@cs.technion.ac.il}.  The research leading to these results has received funding
from the European Community's Seventh Framework Programme (FP7/2007-2013) under grant agreement number 257575.}}

\date{}
\maketitle

\thispagestyle{empty}
\pagenumbering{arabic}

\begin{abstract}

Two polynomials $f, g \in \F[x_1, \ldots, x_n]$ are called {\em shift-equivalent} if there exists a vector
$(a_1, \ldots, a_n) \in {\F}^n$ such that the polynomial identity $f(x_1+a_1, \ldots, x_n+a_n) \equiv g(x_1,\ldots,x_n)$
holds. Our main result is a new randomized algorithm that tests whether two given polynomials are shift equivalent.
Our algorithm runs in time polynomial in the circuit size of the polynomials, to which it is given black box access. This
complements  a previous work of Grigoriev \cite{grig-q2tcs} who gave a deterministic algorithm running in time $n^{O(d)}$
for degree $d$ polynomials.
 
Our algorithm uses randomness only to solve instances of the Polynomial Identity Testing (PIT) problem. Hence, if one
could de-randomize PIT (a long-standing open problem in complexity) a de-randomization of our algorithm would follow.
This establishes an equivalence between de-randomizing shift-equivalence testing and de-randomizing PIT (both in the black-box and the white-box setting). For certain
restricted models, such as Read Once Branching Programs, we already obtain a deterministic algorithm using existing
PIT results.

\end{abstract}

\thispagestyle{empty}

\newpage
\pagenumbering{arabic}

\section{Introduction}

In this paper we address the following problem, which we call {\em Shift Equivalence Testing} (SET). Given two polynomials
$f,g\in \F[\bx]$ (we use boldface letters to denote vectors), decide whether there exists a shift $\ba\in \F^n$ such that
$f(\bx+\ba)\equiv g(\bx)$ and output one if it exist. The symbol $\equiv$ is used to denote polynomial identity (the
polynomials should have the same coefficients). We will focus mainly on the case where $\F$ is a field of characteristic zero
(such as the rational numbers) or has a sufficiently large positive characteristic.

Observe that $f$ is
shift-equivalent to the zero polynomial if and only if $f$ itself is the zero polynomial. Hence, SET is a natural  generalization of
the well-known Polynomial Identity Testing problem (PIT) in which we need to test whether $f(\bx)\equiv 0$ given access to a
succinct  representation of $f$ (say, as a circuit). A classical randomized algorithm by Schwartz-Zippel-DeMillo-Lipton
\cite{Schwartz80,Zippel79,DemilloL78} is known for PIT:  evaluate $f$ on a random input (from a large enough domain)
and test if $f$ evaluates to zero on that point. If $f$ is non-zero, then it is not zero on a random point with very high probability.
In contrast, it is not clear at all how to devise a
randomized algorithm for SET. Unlike PIT, which is a `co-NP' type problem (there is short proof that a polynomial is {\em not}
zero), the SET problem is an `RP$^\text{NP}$' type problem (there is a short witness (the shift itself) that polynomials 
{\em are} shift equivalent, and verifying that witness is in RP).

The problem of equivalence of polynomials under shifts of the input first appeared in the works of Grigoriev, Lakshman,
Saunders and Karpinski \cite{grig-interp, grig-sparseshift, lak-shiftsuni} (see also references therein),
in the context of finding sparse shifts
of a polynomial.  That is, they were interested in finding a shift that will make a given polynomial  sparse, if such a shift
indeed exists. The main motivation for this question comes from considering polynomials in their sum-of-monomials
representation (also called dense representation or depth-$2$ circuit complexity), and the goal is to find a shift that will
make the representation more succinct. Later,  in \cite{grig-q2tcs}, Grigoriev asked
the following question: given two polynomials $f, g \in \F[\bx]$, is there an
efficient algorithm that can find whether there exists a shift $\ba \in {\F}^n$ such that
$f(\bx+\ba) \equiv g(\bx)$? In the same paper, Grigoriev gave algorithms for three versions
of this problem: one deterministic for characteristic zero, one randomized for large enough characteristic $0<p$ and
one quantum for characteristic $2$. The running time of Grigoriev's algorithms was polynomial in the dense representation.
That is, for polynomials of degree $d$ in $n$ variables, the running time was $n^{O(d)}$ (which is an upper bound on the
number of coefficients). In this paper, we address the same question as Grigoriev, but assume that the polynomials are given
in some succinct representation (say, as arithmetic circuits). In this representation, one can hope for running time which is
polynomial in the size of the given circuits (which can be exponentially small relative to the dense representation). For example
the determinant polynomial has $n^2$ variables and degree $n$ but can be given as a circuit of size $n^{O(1)}$ in the succinct
representation.

Our main result is a new randomized (two-sided error) algorithm for SET. The algorithm runs in time polynomial in the circuit
size of the given polynomials. In fact, we only require {\em black-box} access to the polynomials $f$ and $g$ and a bound on
their degree and circuit size. Our algorithm is obtained as a reduction to the PIT problem. Hence, if we were able to perform
deterministic PIT, we could also perform deterministic SET. For certain interesting restricted models of arithmetic computation,
this already gives deterministic SET. For general circuits, our results show that it is equivalently hard to de-randomize PIT and
SET, which is somewhat surprising as by the explanation above it seems as if SET is a much harder problem than PIT.

Below, we will state our results in the most general way, assuming $f$ and $g$ belong to some circuit classes closed under
certain operations.  The reason for doing this is that, in this way, one can see exactly what conditions are required to
de-randomize the algorithm. That is, what kind of deterministic PIT is required to derive deterministic SET (in general we
require PIT for a slightly larger class). Before giving a formal description of our results we take a moment to set up some
necessary background on PIT and hitting sets.


\subsection{PIT and Hitting Sets}

We start by formally defining arithmetic circuits. For more background on arithmetic computation and arithmetic complexity
we refer the reader to the survey \cite{amir-amir}.

\begin{define}[Arithmetic circuit] \label{def:circuit}
An \emph{arithmetic circuit} $C$ is a directed acyclic labeled graph in which the vertices are called `gates'. The gates of
$C$ with in-degree $0$ are called {\em inputs} and are labeled by either a variable from $\{x_1, \ldots, x_n\}$ or by field
element from $\F$. Every other gate of $C$ is labeled by either `$\times$' or `$+$' and has in-degree 2. There is one
gate with out-degree 0, which we call the output gate. Each gate in $C$ computes a polynomial in $\F[\bx]$ in the natural way.
We call the polynomial computed at the output gate `the polynomial computed by $C$'. An arithmetic circuit is called
a \emph{formula} if its underlying graph is a tree.
\end{define}

The PIT problem is defined as follows: we are given an arithmetic circuit $C$ computing a polynomial $f \in \F[\bx]$, and we
have to determine whether the polynomial $f$ is the zero polynomial or not. PIT is a central problem in algebraic complexity.
Deterministically solving PIT is known to imply lower bounds for arithmetic circuits
\cite{HeintzSchnorr80,Agrawal05,KabanetsImpagliazzo04,DSY09}. PIT also has some algorithmic implications. The famous
AKS primality test \cite{AKS04} is based on solving PIT for a specific polynomial. Randomized algorithms for finding a
perfect matching in a given graph reduce the problem to PIT of certain determinants with variables as entries
\cite{Lovasz79,KUW86,MVV87}.

In recent years, there has been considerable progress on the problem of obtaining deterministic PIT algorithms for restricted
classes of circuits. The study of restricted models began with the class of sparse polynomials, which are also referred to as
depth 2 circuits (of the form $\Sigma\Pi$). A long line of work, culminating in the algorithm of Klivans and Spielman
\cite{KlivansSpielman01} gives deterministic PIT for sparse polynomials. In the past decade a series of algorithms
\cite{DvirShpilka06,KayalSaxena07,KarninShpilka08, SaxenaSeshadhri09, KayalSaraf09, 
SaxenaSeshadhri10,AgrawalSSS12} were
devised to solve PIT for circuits of depth 3 with bounded fan-in, which are denoted as $\Sigma\Pi\Sigma(k)$ circuits. A more 
recent line of work, to which we will go back later in the paper, deals with read once branching programs and low rank tensors
\cite{ForbesShpilka12,ForbesShpilka12a,ForbesSS13}.

There are two variants of the PIT problem: in the white-box model the PIT algorithm is given as input an actual arithmetic circuit 
computing $f \in \F[\bx]$ and has to determine if $f \equiv 0$, possibly by inspecting the structure of the circuit. In the
(harder) black-box model, we can only access the polynomial $f$ by querying its value at points $\ba \in \F^n$ of
our choice (we are still assuming $f$ {\em has} some small circuit). It is not hard to see that any deterministic black-box PIT 
algorithm works by evaluating $f$ on some fixed set of
points and outputs $f\equiv 0$ iff all of these evaluations result in zero.  Such a set of evaluation points is called a
\emph{Hitting Set} for the class of circuits to which $f$ is assumed to belong. It is clear that solving PIT in the black-box model
is at least as hard as solving it in the white-box model and indeed, in some cases we have better algorithms in the white-box
model than in the black-box model (compare e.g. \cite{RazShpilka05} to \cite{ForbesShpilka12a} and \cite{ForbesSS13}).

More formally, to deterministically solve  black-box PIT for a class of circuits $\cM$, we need to be able to generate
a hitting set $\cH$ such that for each non-zero polynomial $f$ computed by a circuit in $\cM$, there exists a point
$\ba \in \cH$ such that $f(\ba) \neq 0$. If this is the case, we say that the set $\cH$ \emph{hits}
the class $\cM$, and that the point $\ba$ \emph{hits} $f$.

The following folklore result shows that there {\em exists} a small hitting set for the class of poly-size circuits  (see
Theorem 4.3 of  \cite{amir-amir} for a proof).

\begin{thm}[Non-constructive hitting sets]\label{thm:non constructive hitting set}
For every $n,d,s$ and a field $\F$ of size $|\F| \geq \max(d^2,s)$, there exists a set $\cH \subseteq \F^n$ of size
$|\cH|=\poly(d,s)$ that is a hitting set for all circuits of size at most $s$ and degree at most $d$.
Furthermore, a random set $\cH$ of the appropriate size is such a hitting set with high probability.
\end{thm}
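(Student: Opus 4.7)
The proof is a standard application of the probabilistic method. My plan is to fix a finite subset $S\subseteq\F$ of size $\max(d^2,s)$ and let $\cH=\{\ba_1,\ldots,\ba_k\}$ consist of $k$ independent uniform samples from $S^n$, where $k=\poly(d,s)$ will be chosen at the end. The first ingredient is Schwartz-Zippel: for any fixed nonzero polynomial $f\in\F[\bx]$ of degree at most $d$, we have $\Pr_{\ba\in S^n}[f(\ba)=0]\le d/|S|\le 1/d$, so the probability that $\cH$ misses that particular $f$ is at most $d^{-k}$.

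The main difficulty is the union bound, because over an infinite field there can be infinitely many distinct polynomials of degree at most $d$ that are computable by circuits of size $\le s$. To get around this, I would parameterize such circuits by their \emph{topology}: the underlying DAG together with the $+/\times$ labels at internal gates and the variable/constant labels at the leaves. The number of topologies of size at most $s$ on $n$ variables is only $N_{\mathrm{top}}\le(O(ns))^{O(s)}$, which is finite. For each topology $T$ the ``generic'' circuit with formal constants $\bc=(c_1,\ldots,c_s)$ computes a single joint polynomial $\tilde f_T(\bx,\bc)\in\F[\bx,\bc]$, and every actual size-$\le s$ circuit with topology $T$ arises by substituting some $\bc_0\in\F^s$ for $\bc$. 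Thus it suffices to show, for every fixed $T$, that with high probability over $\cH$ every $\bc_0\in\F^s$ with $\tilde f_T(\bx,\bc_0)\not\equiv 0$ is hit by some $\ba_i\in\cH$.

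The main obstacle is this last step, since $\bc_0$ ranges over a possibly infinite set. My plan to overcome it is to exploit the algebraic structure of the family: writing $\tilde f_T(\bx,\bc)=\sum_\beta r_\beta^T(\bx)\,\bc^\beta$ with each $r_\beta^T\in\F[\bx]$ of degree $\le d$, the coefficient vector $(\tilde f_T(\ba_i,\bc))_{i\le k}$ of $\cH$ against a generic $\bc$ is a polynomial map from $\F^s$ determined by the matrix whose entries are $r_\beta^T(\ba_i)$. The failure event ``some nonzero $\tilde f_T(\cdot,\bc_0)$ is missed by $\cH$'' translates into a nontrivial polynomial identity in the $\ba_i$'s, which by a second application of Schwartz-Zippel (with degree in terms of $s$ and $d$) is violated with probability at least $1-1/(2N_{\mathrm{top}})$ once $k=O(s\log(ns))$. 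A union bound over the $N_{\mathrm{top}}$ topologies then shows that a random $\cH$ of size $\poly(d,s)$ hits every nonzero polynomial of degree $\le d$ computable by a size-$\le s$ circuit with positive (in fact, high) probability, which is exactly the claim.
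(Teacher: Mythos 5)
The paper itself does not prove this theorem --- it cites Theorem~4.3 of the Shpilka--Yehudayoff survey and moves on --- so your proposal has to be evaluated on its own merits. The high-level plan (union bound over the finitely many circuit topologies, then deal with the continuum of constant assignments $\bc_0\in\F^s$ algebraically) is the standard one and is correct in outline; the topology count and the per-polynomial Schwartz--Zippel step are fine.

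The gap is in the central step, where you assert that the failure event for a fixed topology ``translates into a nontrivial polynomial identity in the $\ba_i$'s'' of degree $\poly(s,d)$. That is exactly where the whole difficulty of the theorem lives, and it is not argued. The failure event is an $\exists\,\bc_0$-statement; converting an existential quantifier into vanishing of a single polynomial requires elimination theory (Chevalley only gives a constructible set, not obviously a hypersurface), and one must still prove the resulting polynomial is nonzero and bound its degree --- none of which is addressed. Worse, the mechanism you sketch does not plausibly give a polynomial degree bound: the $\bc$-degree of $\tilde f_T(\bx,\bc)$ for a size-$s$ topology can be as large as $2^s$ (repeatedly squaring a constant gate), so the index set $\{\beta\}$ in $\sum_\beta r_\beta^T(\bx)\,\bc^\beta$ is doubly exponential and the matrix $\bigl(r_\beta^T(\ba_i)\bigr)$ has doubly-exponentially many columns, making the claimed $k=O(s\log(ns))$ and ``degree in terms of $s$ and $d$'' unsupported. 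The actual argument (Heintz--Schnorr 1980, and Theorem~4.3 of the survey) sidesteps this by passing from $\bc$-space to the \emph{image variety} of the parameterization $\bc\mapsto\tilde f_T(\cdot,\bc)$ inside the coefficient space $\F^{\binom{n+d}{d}}$: that variety has dimension at most $s$ no matter how large the $\bc$-degree is, and one controls its degree via Bezout-type bounds to show that $\poly(s,d)$ random linear sections (point evaluations) reduce it to the origin. That dimension/degree argument on the image variety is the missing idea, and without it the proof does not go through.
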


We remark that the theorem above requires that the field size is at least polynomially larger than some of the parameters.
This is necessary for constructing hitting sets since two non-identical polynomials might evaluate to the same value on
all inputs from a sufficiently small sub-field (e.g., $x = x^p$ in $\F_p$). For simplicity, we will assume that we work over
sufficiently large finite fields (so that they contain a hitting set), if necessary by going to an extension field. When working
over characteristic zero  we will implicitly assume that all
constants involved in the hitting sets or in the computation have polynomially long bit representation so we do not have to
keep track of that measure as well. This is quite reasonable given that explicit constructions of hitting sets have this property
and that we can achieve this with randomized constructions as well.

Now, if what we want is to hit only a specific nonzero polynomial, then we do not need the full power of a hitting set. 
As we mentioned before, the randomized algorithm by Schwartz-Zippel-DeMillo-Lipton \cite{Schwartz80,Zippel79,DemilloL78}
gives us a point that hits a given nonzero polynomial with high probability. More formally we have:

\begin{lem}[\cite{Schwartz80,Zippel79,DemilloL78}]\label{lem:sz}
	Let $f(x_1, \ldots, x_n) \in \F[x_1, \ldots, x_n]$ be a nonzero polynomial of degree at most $d$, and let 
	$T \subseteq \F$ be a finite set.
	If we choose $\ba = (a_1, \ldots, a_n) \in T^n$ uniformly at random, then $\prob[f(a) = 0] \le d/|T|$.
\end{lem}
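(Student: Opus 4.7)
The plan is to prove this by induction on the number of variables $n$, a standard approach for the Schwartz--Zippel--DeMillo--Lipton lemma.

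For the base case $n=1$, the polynomial $f(x_1)$ is a nonzero univariate polynomial of degree at most $d$ over a field, and the fundamental theorem of algebra (for fields) gives that $f$ has at most $d$ roots in $\F$. Hence $|\{a \in T : f(a) = 0\}| \le d$, so a uniformly random $a \in T$ satisfies $f(a)=0$ with probability at most $d/|T|$.

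For the inductive step, I would write $f$ as a polynomial in $x_1$ with coefficients in $\F[x_2, \ldots, x_n]$, namely
\[
f(x_1, \ldots, x_n) = \sum_{i=0}^{k} x_1^{i} \, g_i(x_2, \ldots, x_n),
\]
where $k$ is the largest index with $g_k \not\equiv 0$. Since $f \not\equiv 0$ such a $k$ exists, and $g_k$ is a nonzero polynomial in $n-1$ variables of degree at most $d-k$. I would then condition on the draw of $(a_2, \ldots, a_n) \in T^{n-1}$. If $g_k(a_2, \ldots, a_n) \neq 0$, then $f(x_1, a_2, \ldots, a_n)$ is a nonzero univariate polynomial in $x_1$ of degree exactly $k$, so by the base case a uniformly random $a_1 \in T$ sends it to zero with probability at most $k/|T|$. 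By the inductive hypothesis applied to $g_k$, the event that $g_k(a_2, \ldots, a_n) = 0$ has probability at most $(d-k)/|T|$.

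Combining these via a union bound over the two "bad" events gives
\[
\prob[f(\ba) = 0] \le \prob[g_k(a_2,\ldots,a_n) = 0] + \prob[f(\ba)=0 \mid g_k(a_2,\ldots,a_n) \neq 0] \le \frac{d-k}{|T|} + \frac{k}{|T|} = \frac{d}{|T|},
\]
completing the induction. There is no real obstacle here: the only subtlety worth flagging is that one must isolate a variable (here $x_1$) in which $f$ has a nontrivial coefficient of maximal degree in that variable, so that the reduced polynomial $g_k$ is genuinely nonzero and the degree accounting $(d-k) + k = d$ works out cleanly.
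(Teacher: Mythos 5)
Your proof is correct and follows the standard induction on the number of variables. Note, however, that the paper does not prove this lemma at all: it cites it directly to Schwartz, Zippel, and DeMillo--Lipton and uses it as a known black-box fact, so there is no in-paper argument to compare against. One small terminological nit: the bound of at most $d$ roots for a nonzero degree-$d$ univariate polynomial over an arbitrary field follows from the factor theorem (polynomial division), not the fundamental theorem of algebra, which is specifically the statement that $\mathbb{C}$ is algebraically closed; the fact you invoke is true and elementary, but is usually not given that name.
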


Notice that, to achieve error at most $\veps$ with this lemma, we should pick a set $T$ of size $|T| \ge d/\veps$. 
Generating such a uniformly random element $\ba$ from $T^n$ requires $n \cdot \lceil \log(d/\veps) \rceil$ random bits.

\subsection{Formal statement of our results}

Our results rely on closure properties of the underlying circuit classes.

\begin{define}\label{def:close}
Given a class of arithmetic circuits $\cM$ we will say that $\cM$ is closed under an operator $A : \F[\bx] \mapsto \F[\bx]$ if the
following property holds. Let $f$ be an $n$-variate polynomial of total degree $d$ that is computed by a circuit of
size $s$ from $\cM$. Then we require that $A(f)$ is computed by a
circuit of size $\poly(n,d,s)$ from $\cM$.
\end{define}

For instance, one operator that is very common and under which all of the most studied circuit classes are closed is 
the restriction operator, namely, the operator that substitutes some of the variables of $f(\bx)$ by 
field elements. It is easy to see that by substituting some variables by field elements, the new polynomial will also
be computed by a circuit of size less than $s$, and in general the new polynomial will also belong to the same class
as $f$.

In addition, we will need to discuss closure under three different operators:
\begin{itemize}
	\item {\bf Directional partial derivatives:} The partial derivatives $\frac{\D f}{\D x_i}$ of a polynomial $f$ are defined in the
	usual sense (over finite fields we use the formal definition for polynomials). We define the  {\em first order partial 
	derivative of $f$ in direction $\ba \in \F^n$} to be 
	$$f^{(1)}(\ba,\bx) \triangleq \ds\sum_{t=1}^{n} a_{t}  \cdot \frac{\D f}{\D x_{t}}(\bx)$$
	(see Definition~\ref{def:dirder}). Apart from the class of general circuits (and formulas) that are  closed 
	under taking first order derivatives \cite{BS83},  the class of sparse polynomials (depth $2$ circuits) is
	also closed under directional partial derivatives. Note, however, that depth-$3$ circuits with at most $k$ 
	multiplication gates, also known as $\Sigma\Pi\Sigma(k)$ circuits, are {\em not} closed under directional partial 
	derivatives as these might increase the top fanin.
	
	\item {\bf Homogeneous components:} If $f \in \F[\bx]$
	is a polynomial of degree $d$, we will denote the homogeneous component of degree $k$ of $f$ by $H^k(f(\bx))$. 
	General circuits and formulas are close under taking homogeneous components, and the same also holds for the 
	class of sparse polynomials (see e.g the proof of Lemma~\ref{lem:interp}).
	
	\item {\bf Shifts:} Here we require that a class will be closed under the operation $f(\bx) \mapsto f(\bx + \ba)$ for some
	$\ba \in \F^n$. Again, circuits and formulas as closed to shifts, however, the class of sparse polynomials is not.
\end{itemize}

We now describe our main result that solves the SET problem given a PIT algorithm.

\begin{thm}[Main theorem]\label{thm:main-int}
Let $\F$ be a field of characteristic zero. Let $\cM_1$ and $\cM_2$ be two circuit classes such that
\begin{enumerate}
	\item $\cM_1$ is closed under taking homogeneous components
	and closed under (first-order) directional derivatives.
	\item $\cM_2$ is closed under
	taking shifts.
	\item We have a (white-box) black-box PIT algorithm $\cal P$ for polynomials in
	$\cM_1, \cM_2$ and for polynomials of the form $f-g$, where $f \in \cM_1$ and $g \in \cM_2$.
\end{enumerate}
 Then,  there exists an algorithm $\cal S$ that, given (white-box) black-box
access to  polynomials $f \in \cM_1, g \in \cM_2$ and a bound $d$ on the their degree, returns $\ba \in \F^n$ so that
$g(\bx+\ba) \equiv f(\bx)$, if such a shift exists, or returns FAIL, if none exist.

Furthermore:
\begin{itemize}
	\item The running time of $\cal S$ is polynomial in the running time of $\cal P$ and in the other parameters ($n,d$).
	\item If the PIT algorithm $\cal P$ is deterministic then so is $\cal S$.
	\item All of the above holds also for the case when $\F$ is a finite field with characteristic greater than $d$.
\end{itemize}
\end{thm}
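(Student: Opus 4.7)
The plan is to reduce SET to PIT by building the shift $\ba$ iteratively, degree by degree, through linear systems derived from the directional Taylor expansion
$$g(\bx + \by) = \sum_{k=0}^d \frac{1}{k!}\, g^{(k)}(\by, \bx),$$
which I read off one homogeneous component (in $\bx$) at a time.

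First I verify the necessary condition $H^d(f) \equiv H^d(g)$ -- the top homogeneous part is shift-invariant -- by invoking the PIT algorithm on $H^d(f) - H^d(g)$: the first summand lies in $\cM_1$ by closure under homogeneous components, and the second can be extracted from the circuit $g(t\bx)$ by Lagrange interpolation in $t$. If the identity fails, return FAIL.

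Next I iterate for $m = 1, 2, \ldots, d$, maintaining a partial shift $\ba^{(m-1)}$ and an ``ambiguity subspace''
$$L^{(m-1)} := \bigl\{ \delta \in \F^n \,:\, \partial_\delta H^{d-j}(f) \equiv 0 \text{ for all } j = 0, \ldots, m-2 \bigr\}.$$
Set $g^{(m-1)}(\bx) := g(\bx + \ba^{(m-1)}) \in \cM_2$ (by shift-closure) and seek $\delta \in L^{(m-1)}$ making $H^{d-m}(g^{(m-1)}(\bx + \delta)) = H^{d-m}(f)(\bx)$. The key observation is that for $\delta \in L^{(m-1)}$, every Taylor term of order $\geq 2$ in $\delta$ vanishes identically: the defining relation $\partial_\delta H^{d-j}(f) \equiv 0$ implies $\partial_\delta^k H^{d-j}(f) \equiv 0$ for every $k \geq 1$ and $j \leq m-2$, by symmetry of mixed partials, and these are exactly the homogeneous components of $g^{(m-1)}$ (which agree with those of $f$ by induction) that feed the higher-order Taylor terms at degree $d-m$. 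What remains is a \emph{linear} equation
$$\sum_i \delta_i\, \partial_i H^{d-m+1}(g^{(m-1)})(\bx) = H^{d-m}(f)(\bx) - H^{d-m}(g^{(m-1)})(\bx),$$
which I concretize into an actual linear system by evaluating both sides at $\dim L^{(m-1)}$ Schwartz--Zippel-random points; both sides are black-box computable via closure of $\cM_1$ under derivatives and homogeneous components, together with the interpolation trick to extract $H^j(g^{(m-1)})$. If the system is inconsistent, return FAIL; otherwise pick any solution $\delta^{(m)} \in L^{(m-1)}$ and update $\ba^{(m)} := \ba^{(m-1)} + \delta^{(m)}$.

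After $d$ iterations, verify the candidate $\ba := \ba^{(d)}$ by invoking PIT on $f(\bx) - g(\bx + \ba) \in \cM_1 - \cM_2$, returning $\ba$ on success and FAIL otherwise. The main technical obstacle is rigorously justifying the linearization claim at every step -- that the nested stabilizer condition $\delta \in L^{(m-1)}$ really does annihilate all higher-order-in-$\delta$ terms of the Taylor expansion at degree $d-m$ -- which requires an induction on $m$ together with the symmetry of mixed partial derivatives. The hypothesis that $\F$ has characteristic $0$ or characteristic greater than $d$ enters precisely here, to ensure that the factorials $k!$ appearing in the Taylor expansion are invertible.
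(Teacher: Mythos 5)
Your proposal is correct and traces essentially the same route as the paper: iterate over homogeneous degrees from top to bottom, use the Taylor expansion to read off the constraint at each degree, and linearize the a priori nonlinear dependence on the candidate shift via the observation that once a first directional derivative of a homogeneous component vanishes, all higher-order directional derivatives in that direction vanish too. The one presentational difference is that the paper keeps the reference point fixed and at each iteration solves the full cumulative linear system in a fresh full candidate shift, substituting the previous solution only into the higher-order Taylor terms, whereas you re-shift $g$ by the accumulated partial shift and restrict the increment $\delta$ to an explicit ``ambiguity subspace'' $L^{(m-1)}$. The two packagings are interchangeable; the paper itself remarks, in its comparison with Carlini's lemma, that its own iterations amount to pruning an affine subspace of candidate shifts.

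A few points to tighten. First, the number of evaluation points needed to turn each polynomial-identity constraint into a concrete linear system in $\delta$ is governed by the hitting-set size for $\cM_1$ (in the deterministic case), or by the Schwartz--Zippel error bound together with the number of polynomials involved (in the randomized case), not by $\dim L^{(m-1)}$. Second, to see why PIT for $\cM_1$ is what is needed, you should make explicit that when a valid shift exists the right-hand side $H^{d-m}(f) - H^{d-m}(g^{(m-1)})$ equals a first directional derivative of $H^{d-m+1}(f)$ (by the invariant plus the vanishing of the higher-order terms) and hence lies in $\cM_1$; when no shift exists the iteration may produce garbage, but this is caught by the terminal PIT on $f(\bx) - g(\bx+\ba^{(d)})$, which is a difference of a polynomial in $\cM_1$ and one in $\cM_2$, exactly the form covered by the hypothesis, and this is precisely the source of the two-sided error. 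Finally, ``symmetry of mixed partials'' is not the right justification for the domino effect: one has $\partial_\delta^{k} P = \partial_\delta(\partial_\delta^{k-1} P)$, so $\partial_\delta P \equiv 0$ yields $\partial_\delta^{k} P \equiv 0$ for every $k \ge 1$ by a one-line induction.
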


Combining Theorem~\ref{thm:main-int} with Lemma~\ref{lem:sz} we obtain  a randomized SET algorithm for any pair of 
polynomials.

\begin{thm}[Randomized SET for pairs of polynomials]\label{thm:main-rand-int}
Let $\F$ be a field of characteristic zero or of characteristic larger than $d$. There exists a randomized algorithm that, given
black box access to $f,g \in \F[\bx]$ of degree at most $d$, returns $\ba \in \F^n$
such that $g(\bx+\ba) \equiv f(\bx)$, if such a shift exists, or FAIL otherwise. 
The algorithm runs in time $\poly(n,d, \log(1/\veps))$, where $\eps$ is the  probability or returning a wrong answer (i.e., FAIL if a shift exists or a shift if none exists).
\end{thm}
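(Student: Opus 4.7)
The plan is to deduce Theorem~\ref{thm:main-rand-int} by invoking the main reduction (Theorem~\ref{thm:main-int}) with the largest possible circuit class, and then using Schwartz--Zippel (Lemma~\ref{lem:sz}) as the PIT subroutine. Concretely, I will take $\cM_1 = \cM_2$ to be the class $\cM$ of all $n$-variate polynomials of total degree at most $d$. Since in Theorem~\ref{thm:main-rand-int} we only have black-box access and a degree bound, this is effectively the only structural constraint we can exploit.

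The first step is to verify the closure hypotheses of Theorem~\ref{thm:main-int}. This is where the generality of $\cM$ pays off: taking homogeneous components, first-order directional derivatives, and shifts by a vector $\ba\in\F^n$ all preserve the property of having total degree at most $d$; and if $f,g\in\cM$ then $f-g\in\cM$ as well. Hence $\cM$ is closed under all three relevant operators, and the class of differences $\{f-g : f\in\cM_1,\,g\in\cM_2\}$ coincides with $\cM$.

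The second step is to supply a randomized PIT subroutine for $\cM$. By Lemma~\ref{lem:sz}, evaluating a degree-$d$ polynomial on a uniform random point from $T^n$ with $|T|\ge d/\veps'$ detects nonzeroness with error at most $\veps'$ and uses only $n\lceil \log(d/\veps')\rceil$ random bits per call, running in time $\poly(n,\log d,\log(1/\veps'))$.

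The final step is error accounting. The algorithm $\cal S$ produced by Theorem~\ref{thm:main-int} runs in time polynomial in $n$, $d$, and in the cost of the PIT oracle, so it makes at most $N=\poly(n,d)$ oracle calls. Setting $\veps'=\veps/N$ and applying a union bound guarantees that with probability at least $1-\veps$ every PIT answer is correct, in which case $\cal S$ returns either a valid shift or the honest FAIL. The total running time is $\poly(n,d,\log(1/\veps))$, as claimed. The main obstacle --- really the only non-routine point --- is extracting an explicit polynomial bound on the number of PIT calls from the proof of Theorem~\ref{thm:main-int}; once this count is in hand, the closure verification, the Schwartz--Zippel error, and the union bound are all routine bookkeeping.
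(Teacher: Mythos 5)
Your plan — instantiate Theorem~\ref{thm:main-int} with the universal class $\cM_1=\cM_2$ of all $n$-variate polynomials of degree at most $d$, use Schwartz--Zippel (Lemma~\ref{lem:sz}) as the PIT subroutine, and union-bound the per-call errors — is exactly the paper's route, and your closure checks and running-time bookkeeping match (the paper sets the per-call error to $\veps/d^2$, as the main loop solves $d$ linear systems and makes $O(d)$ further PIT calls). The one thing you compress into ``extract a polynomial bound on the number of PIT calls'' is where the actual substance lies: in the black-box setting, the algorithm $\cal S$ does not merely invoke a PIT oracle. At each iteration it must solve a system of the form $g(\bx) \equiv \sum_i a_i h_i(\bx)$ in the unknowns $a_i$, given only black-box access to $g$ and the $h_i$. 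The paper's black-box analysis (Lemma~\ref{lem:poleq-bb}) solves this by evaluating all the polynomials over a fixed \emph{hitting set}, which Schwartz--Zippel does not supply. The bridge is Lemma~\ref{lem:poleq-rand}: a randomized linear-system solver that iteratively builds a basis for the span of the $h_i$ by evaluating small determinant polynomials at random points (each such evaluation is, morally, a Schwartz--Zippel call) and then verifies the unique candidate combination by one final PIT. You should cite or re-derive that lemma explicitly; once it is in hand, the union bound over its random evaluations plus the final verification PIT gives exactly the $\poly(n,d,\log(1/\veps))$ two-sided-error guarantee you claim.
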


\begin{remark} An interesting fact about Theorem~\ref{thm:main-rand-int} is that the algorithm we obtain has a two
sided error (this can be seen from the proof). This fact is in contrast to the fact that most randomized algorithms in the
algebraic setting have one-sided error.
\end{remark}

Theorem~\ref{thm:main-int} already leads to deterministic algorithms for certain restricted models. For instance, in the recent
works of Forbes and Shpilka \cite{ForbesShpilka12,ForbesShpilka12a} and of Forbes, Saptharishi and Shpilka
\cite{ForbesSS13}, the authors obtain a quasi-polynomial deterministic PIT algorithm
for read-once oblivious algebraic branching programs (ROABPs). Their result, together with our algorithm, imply that
we can find out whether two ROABPs are shift-equivalent in deterministic quasi-polynomial time. Since this class also captures
tensors,\footnote{We note that the work \cite{AgrawalSS13} also gives a black-box PIT algorithm for tensors.} an application 
of our result is that we can find out whether two tensors are shift-equivalent in quasi-polynomial time
(we refer the reader to \cite{ForbesShpilka12a,ForbesSS13} for definitions of ROABPs and tensors).

\begin{cor}\label{ROABP=shift}
There is a deterministic quasi-polynomial time algorithm that given black-box access to two polynomials $f$ and $g$ computed by
read-once oblivious algebraic branching programs, decides whether there exists $\ba\in\F^n$ such that
$f(\bx+\ba) \equiv g(\bx)$ and in case that such a shift exists, the algorithm outputs one.
\end{cor}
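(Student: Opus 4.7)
The plan is to instantiate Theorem~\ref{thm:main-int} by taking both $\cM_1$ and $\cM_2$ to be (essentially) the class of polynomials computed by read-once oblivious algebraic branching programs. Since the theorem's conclusion is that $\cal S$ runs in polynomial time in the running time of the PIT routine $\cal P$ and in $n,d$, a quasi-polynomial black-box PIT for ROABPs will yield a quasi-polynomial deterministic SET algorithm.

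The main task is therefore to verify the three closure properties. Closure under shifts is immediate: given an ROABP with a fixed variable order $x_{\pi(1)},\ldots,x_{\pi(n)}$ whose layer for $x_{\pi(i)}$ has edge labels $\alpha_{i,j}(x_{\pi(i)})$ (univariate polynomials in $x_{\pi(i)}$), the shift $x_{\pi(i)}\mapsto x_{\pi(i)}+a_{\pi(i)}$ acts by relabelling each edge with $\alpha_{i,j}(x_{\pi(i)}+a_{\pi(i)})$, preserving the width and the variable order. Closure under homogeneous components is obtained by the standard trick of tensoring the ROABP with a small ``degree-counting'' ROABP of width $d+1$ that tracks the partial degree after each layer; projecting onto the state corresponding to total degree $k$ extracts $H^k(f)$, incurring only a factor-$(d+1)$ blow-up in width while preserving the variable order. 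Closure under first-order directional derivatives follows from the identity $f^{(1)}(\ba,\bx)=\sum_t a_t\,\frac{\D f}{\D x_t}(\bx)$ together with the Baur--Strassen-type construction, done layer by layer, that computes all partial derivatives of an ABP with a polynomial blow-up in size; applied to an ROABP with the same variable order, the construction yields another ROABP computing $f^{(1)}(\ba,\bx)$ for any fixed vector $\ba$, which is exactly the closure condition required by Definition~\ref{def:close}.

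For the PIT hypothesis (Item~3 of Theorem~\ref{thm:main-int}), I would invoke the black-box quasi-polynomial-time hitting set constructions for ROABPs of Forbes--Shpilka \cite{ForbesShpilka12,ForbesShpilka12a} and Forbes--Saptharishi--Shpilka \cite{ForbesSS13}. The PIT for the difference class $\{f-g:f,g\in\text{ROABP}\}$ reduces to PIT for ROABPs of twice the width and with a shared variable order, since the sum and difference of two ROABPs with a common order can be computed by a single ROABP of width equal to the sum of the widths (take the direct sum of the two programs and merge source and sink). Thus all three PIT instances needed by Theorem~\ref{thm:main-int} are handled by the same quasi-polynomial hitting set.

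Once the three closure properties and the PIT instance are in place, the conclusion of Theorem~\ref{thm:main-int} gives a deterministic algorithm that, on black-box input to two ROABPs $f$ and $g$ of degree at most $d$, outputs a vector $\ba$ with $f(\bx+\ba)\equiv g(\bx)$ when one exists and FAIL otherwise, in time polynomial in the running time of $\cal P$ and in $n,d$; since $\cal P$ runs in quasi-polynomial time on ROABPs, so does the resulting SET algorithm. The main obstacle I anticipate is the careful bookkeeping of the ROABP widths through the derivative and homogeneous-component constructions, and in particular confirming that these constructions preserve a \emph{single common} variable order so that the resulting PIT instances lie in the class for which the hitting sets of \cite{ForbesShpilka12a,ForbesSS13} apply; this is the step most likely to require a technical lemma rather than a one-line verification.
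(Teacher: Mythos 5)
Your proposal is correct and takes essentially the same approach as the paper: the corollary is obtained by instantiating Theorem~\ref{thm:main-int} with $\cM_1=\cM_2$ taken to be the class of ROABPs and invoking the quasi-polynomial hitting sets of \cite{ForbesShpilka12,ForbesShpilka12a,ForbesSS13}. The paper states the corollary without spelling out the closure verifications (shifts, homogeneous components, directional derivatives) or the reduction of the $f-g$ PIT to ROABP PIT; you supply these details correctly.
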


As the class of sparse polynomials is closed under taking homogeneous components and under first order directional
derivatives (a directional derivative blows up the size of the circuit by at most a factor of $n$) we obtain the
following corollary.

\begin{cor}\label{cor:sparse-shift}
Let $\cM_2$ be any circuit class so that
\begin{enumerate}
	\item $\cM_2$ is close under shifts.
	\item There is a deterministic PIT algorithm testing if $f-g$ is zero for sparse $f$ and $g \in \cM_2$.
\end{enumerate}
Then, we can test whether $f$ and $g$ are shift-equivalent deterministically in time $\poly(n,s)$
\end{cor}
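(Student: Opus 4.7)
The plan is to obtain Corollary~\ref{cor:sparse-shift} as an immediate instantiation of Theorem~\ref{thm:main-int} with $\cM_1$ taken to be the class of $s$-sparse polynomials. For this I need to verify that the sparse class satisfies all hypotheses required of $\cM_1$, and then match up the three PIT requirements of Theorem~\ref{thm:main-int} with the deterministic PIT tools we have available.

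First I would verify the two closure hypotheses for $\cM_1$. If $f = \sum_{i=1}^{s} c_i \bx^{\alpha_i}$ is $s$-sparse then each homogeneous component $H^k(f)$ is obtained by retaining only those monomials with $|\alpha_i|=k$, so $H^k(f)$ has sparsity at most $s$; closure under homogeneous components is therefore trivial. For directional derivatives, $\frac{\partial f}{\partial x_t}$ is $s$-sparse (each monomial yields at most one monomial of degree one less), and summing $n$ such polynomials as in $f^{(1)}(\ba,\bx) = \sum_{t=1}^{n} a_t\,\frac{\partial f}{\partial x_t}(\bx)$ gives sparsity at most $ns$. Hence sparse polynomials are closed under first-order directional derivatives with only a factor-$n$ blow up, exactly as the parenthetical preceding the corollary already remarks.

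Next I would line up the three PIT ingredients. PIT on $\cM_1$ alone, i.e.\ deterministic black-box identity testing for sparse polynomials, is supplied by the Klivans--Spielman algorithm~\cite{KlivansSpielman01}, which runs in $\poly(n,s,d)$ time. PIT on $\cM_2$ alone is a special case of hypothesis~(2) of the corollary: taking $f\equiv 0$ (which is trivially sparse) reduces it to deciding whether $g \equiv 0$ for $g\in\cM_2$. Finally, the mixed case --- deterministic PIT for $f-g$ with $f$ sparse and $g\in\cM_2$ --- is exactly hypothesis~(2). Combined with hypothesis~(1), which supplies the shift-closure of $\cM_2$, we have all three of the structural requirements on $(\cM_1,\cM_2)$ and all three of the PIT requirements listed in Theorem~\ref{thm:main-int}.

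Applying Theorem~\ref{thm:main-int} then yields a deterministic SET algorithm whose running time is polynomial in the running times of the underlying PIT subroutines and in $n,d$. Since Klivans--Spielman runs in $\poly(n,s)$ and the remaining PIT calls are, by assumption, deterministic and polynomial in $n$ and the circuit size, the overall algorithm runs in time $\poly(n,s)$ as claimed. There is no real obstacle here beyond this bookkeeping check that the intermediate polynomials produced inside $\cal S$ (homogeneous components and directional derivatives of the sparse input, and shifts of the $\cM_2$ input) all stay in classes for which deterministic PIT is available --- which is precisely what the closure properties ensure.
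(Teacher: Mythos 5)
Your proof is correct and follows the same route the paper intends: instantiate Theorem~\ref{thm:main-int} with $\cM_1$ the class of sparse polynomials, noting closure under homogeneous components and first-order directional derivatives (with the factor-$n$ blow-up), and supply the three PIT requirements from Klivans--Spielman and the corollary's hypothesis~(2). The paper states this only as the one-line remark immediately preceding the corollary; your write-up is just a more explicit bookkeeping of the same argument.
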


As an application of our main theorem in the white-box model, we note that Saha et al. gave a polynomial time algorithm
for testing whether a given sparse polynomial equals a $\Sigma\Pi\Sigma(k)$ circuit \cite{SahaSS11}. Since their algorithm
works in the white-box model, we can utilize it in the variant of our main theorem in the white-box model to find whether
a given sparse polynomial and a polynomial in $\Sigma\Pi\Sigma(k)$ are shift-equivalent.
We also note that we can  make their algorithm work in the black-box case as well. Using the reconstruction algorithms of 
\cite{Shpilka09,KarninShpilka09} we can first reconstruct the $\Sigma\Pi\Sigma(k)$ circuit in quasi-polynomial time. We can 
also interpolate the sparse polynomial in polynomial time (for interpolation of sparse polynomials see e.g. 
\cite{KlivansSpielman01}) and then apply our methods together with the PIT algorithm of Saha et al. to solve the 
shift-equivalence problem.\footnote{Note that the reconstruction algorithm of \cite{Shpilka09,KarninShpilka09} returns 
so-called generalized $\Sigma\Pi\Sigma(k)$ circuits. Nevertheless, one can observe that the algorithm of Saha et al.  
works for such circuits as well.}

\subsection{Overview of the algorithm}\label{sec:proof overview}

In this section we give a short overview our algorithm and its analysis. Assume we are given $f(\bx)$ and
$g(\bx)$ and we have to find $\ba\in\F^n$ such that $f(\bx+\ba)=g(\bx)$. Let us assume w.l.o.g. that $\deg(f)=\deg(g)=d$.
Let us also denote $f(\bx)=\sum_{i=0}^{d}H^i(f(\bx))$ where each $H^i(f)$ is homogeneous of degree $i$ and similarly,
$g(\bx)=\sum_{i=0}^{d}H^i(g(\bx))$.

Now, let us compute the homogeneous components of $f(\bx+\ba)$. Denote with $H^i(f(\bx+\ba))$ the homogeneous
part of degree $i$ of $f(\bx+\ba)$. We have that
$$H^d(f(\bx+\ba)) = H^d(f(\bx)).$$
Thus, our first step of the algorithm is to verify that
$$H^d(g(\bx))=H^d(f(\bx)).$$
Next we move to degree $d-1$. A quick calculation gives
\begin{equation}\label{eq:one}
H^{d-1}(g(\bx))=H^{d-1}(f(\bx+\ba)) = H^{d-1}(f(\bx)) + \sum_{k=1}^{n}a_k\cdot \frac{\partial H^d(f(\bx))}{\partial x_k}.
\end{equation}
Observe that this is a linear equation in the entries of $\ba$. It turns out that if our circuit class is closed under directional
derivatives, that is, if the polynomial $\sum_{k=1}^{n}a_k\cdot \frac{\partial H^d(f(\bx))}{\partial x_k}$ belongs to the same 
circuit class as $f(\bx)$ (or a slightly larger class), and if we have a hitting set for the class of polynomials of the form
$\sum_{k=1}^{n}a_k\cdot \frac{\partial H^d(f(\bx))}{\partial x_k}$, for every $\ba \in \F^n$, 
then we can solve this system of equations and find some solution $\bb$ such that 
$$H^{d-1}(g(\bx))=H^{d-1}(f(\bx+\bb)) = H^{d-1}(f(\bx)) + \sum_{k=1}^{n}b_k\cdot \frac{\partial H^d(f(\bx))}{\partial x_k}.$$
As we will see in section~\ref{sec:poleq-bb}, if we allow randomness then we can also solve this system of equations without 
having a hitting set.

Note that at this point we might have $\bb\neq \ba$. Hence, we have found a shift $\bb$ that makes the homogeneous
parts of degree $d$ and $d-1$ in $f$ and $g$ equal. We now consider the homogeneous component of degree $d-2$. Here we
have the system of equations
\begin{equation}\label{eq:int-deg-2}
H^{d-2}(g(\bx)) =H^{d-2}(f(\bx+\ba)) = H^{d-2}(f(\bx)) + \sum_{k=1}^{n}a_k\cdot \frac{\partial H^{d-1}(f(\bx))}{\partial x_k} +
\sum_{\ell,k=1}^{n}a_\ell a_k\frac{\partial^2 H^d(f(\bx))}{\partial x_\ell\partial x_k}.
\end{equation}
And now we seem to be in trouble as this is a system of quadratic equations in the entries of $\ba$.
Here comes our crucial observation. Recall that we have found $\bb$ such that
$$H^{d-1}(g(\bx)) = H^{d-1}(f(\bx)) + \sum_{k=1}^{n}b_k\cdot \frac{\partial H^d(f(\bx))}{\partial x_k}.$$
We also have that
$$H^{d-1}(g(\bx)) =H^{d-1}(f(\bx+\ba))= H^{d-1}(f(\bx)) + \sum_{k=1}^{n}a_k\cdot \frac{\partial H^d(f(\bx))}{\partial x_k}.$$
Hence,
$$\sum_{k=1}^{n}(a_k-b_k)\cdot \frac{\partial H^d(f(\bx))}{\partial x_k}=0.$$
This means that the directional derivative of $H^d(f(\bx))$ in direction $\ba-\bb$ is zero.
Or, in other words, that the polynomial $H^d(f(\bx))$ is fixed along that direction. This means that no matter how
many derivatives we take along direction $\ba-\bb$ we always
get the zero polynomial. Therefore, if we take a second derivative in direction $\bc$ or in direction
$\bc + (\ba-\bb)$ we will get the same answer, no matter what $\bc$ is. In particular, this gives
$$\sum_{\ell,k=1}^{n}a_\ell a_k\frac{\partial^2 H^d(f(\bx))}{\partial x_\ell\partial x_k} = \sum_{\ell,k=1}^{n}b_\ell b_k
\frac{\partial^2 H^d(f(\bx))}{\partial x_\ell\partial x_k},$$ as both sides compute the second directional derivatives in directions
$\ba$ and $\bb$, respectively. Going back we now have that system~\eqref{eq:int-deg-2}
is equivalent to the system
\begin{equation}\label{eq:int-deg-2-b}
H^{d-2}(g(\bx)) =H^{d-2}(f(\bx+\ba)) = H^{d-2}(f(\bx)) + \sum_{k=1}^{n}a_k\cdot \frac{\partial H^{d-1}(f(\bx))}{\partial x_k} +
\sum_{\ell,k=1}^{n}b_\ell b_k\frac{\partial^2 H^d(f(\bx))}{\partial x_\ell\partial x_k}.
\end{equation}
Since we already computed $\bb$, we can look for a solution to both systems of equations \eqref{eq:int-deg-2}
and \eqref{eq:int-deg-2-b} (as linear systems in the coefficients of $\ba$). Once we find such a solution, say $\bc$, we can
use it to set up a new system of equations involving the homogeneous components of degree $d-3$ and so on.

Thus, our algorithm works in iterations. We start by solving a system of linear equations. We then use the solution that we
found to set up another system and then we find a common solution to both systems. We use the solution that we have
found to construct a third system of equations and then solve all three systems together etc.
At the end we have a solution for all systems, and at this point it is not difficult to verify, that if such a shift $\ba$ exists, then
the solution that we found is indeed a valid shift. This can be verified by running one PIT for checking whether the shift of
$f$ that we have found and $g$ are equivalent.

All the steps above can be completed using randomness, including solving the black-box system of equations, or using PIT
for the relevant circuit classes.

\subsection{Related work}

The works of Grigoriev,  Lakshman, Saunders and Karpinski \cite{grig-interp, grig-sparseshift, lak-shiftsuni}, try to
solve the problem of finding sparse shifts of given polynomials, in order to make their representation more succinct.
In \cite{grig-interp}, Grigoriev and Karpinski studied the problem of finding sparse affine-shifts of multivariate polynomials
$f(\bx)$, that is, transformations of the form $\bx \mapsto A \bx + \bb$ where $A$ is full-rank,
which make the input polynomial $f(A\bx+\bb)$ sparse. In \cite{lak-shiftsuni}, the authors consider the problem of finding 
sparse shifts of univariate polynomials, and of
determining uniqueness of a sparse shift. Given an input polynomial $f(x)$, they use a criterion based on the vanishing of the
Wronskian of some carefully designed polynomials, which depend on the derivatives $f^{(i)}(x)$, in order to obtain an efficient
algorithm for the univariate case.

Later, in \cite{grig-q2tcs}, Grigoriev gave three algorithms for the SET problem, which were
polynomial in the size of the dense representation of the input polynomials. His algorithms were based on a structural
result about the set of shifts that stabilize the polynomial, that is, the set of points $\ba \in \F^n$ for which
$f(\bx+\ba) \equiv f(\bx)$. We denote this stabilizer by $\cS_{f}$. He noticed that $\cS_f$ is a subspace of
$\F^n$ and that the set of shifts that are solutions
to the SET problem with input polynomials $f,g$, which we denote $\cS_{f,g}$, is a coset of $\cS_f$.
After this observation, Grigoriev established the following recursive relations between $\cS_{f,g}$ and
$\cS_{\frac{\D f}{\D x_i}, \frac{\D g}{\D x_i}}$, for each $x_i$:
\[  \cS_{f,g} = \bigcap_{i=1}^n \cS_{\frac{\D f}{\D x_i}, \frac{\D g}{\D x_i}} \cap \{ \ba \in \F^n \ : \ f(\ba) = g(0) \}. \]
From these relations, Grigoriev devised a recursive algorithm that finds $\cS_{f,g}$ by finding the subspaces corresponding
to $\cS_{\frac{\D f}{\D x_i}, \frac{\D g}{\D x_i}}$. Because this recursive procedure will find all the subspaces
$\cS_{\frac{\D f}{\D m}, \frac{\D g}{\D m}}$ for every monomial $m$ of degree less than or equal to $d = \max(d_f, d_g)$,
the running time of his algorithm is bounded by $n^{O(d)}$. Our approach is different from Grigoriev's in the sense that 
we avoid the recursive relations and find a shift by iteratively constructing a shift which makes $f$ and $g$ agree on 
their homogeneous parts of up to a certain degree, starting from the homogeneous parts of highest degree down to
the homogeneous parts of lowest degree (i.e., the constant term).

The study of equivalences of general polynomials under affine transformations, which we refer to as
affine-equivalence, was started by Kayal in \cite{kayal12} (note that this generalizes the problem studied in 
\cite{grig-interp}). We say that $f$ and $g$ are affine-equivalent if there exists
a matrix $A$ and a shift $\bb$ such that $f(\bx)=g(A\bx+\bb)$. In this  work, Kayal analyzes whether a given
polynomial $f$ can be obtained by an affine transformation of a given polynomial $g$, where $g$ is usually taken to
be a ``complete'' polynomial in some arithmetic circuit class, such as the Determinant or Permanent polynomials. In his
paper, Kayal establishes NP-hardness of the general problem of determining affine-equivalence between two
arbitrary polynomials. Moreover, he provides randomized algorithms for the affine-equivalence problem when
one of the polynomials is the Permanent or the Determinant and the affine transformations $\bx \mapsto A\bx + \bb$
are of a special form (in the case of Determinant and Permanent, the matrix $A$ must be invertible). Kayal provides
randomized algorithms for some other classes of homogeneous polynomials, and for more details we refer the reader
to the paper \cite{kayal12}. Our work is different from Kayal's work since in our setting we are only interested in
shift-equivalences, and in this feature we are less general than Kayal's work, but we also consider larger classes of
polynomials, in which case we are more general than Kayal's work.

Following the initial publication of this manuscript, an anonymous reader pointed out an alternative way to solve the SET problem using a randomized algorithm. This approach uses a lemma due to Carlini \cite{car06} (see also
\cite[Lemma 17]{kayal12}) and an argument implicit in Kayal's work \cite[Section 7.3]{kayal12}. We now discuss and compare this alternative approach to ours.

 In his lemma, Carlini uses a linear transformation on the variables in order to get
rid of ``redundant variables," that is, variables $x_i$ for which (after a suitable change of basis) the derivative 
$\frac{\D f}{\D x_i}$ of the polynomial is zero. The idea is to use this lemma to eliminate the
``redundant variables" and work only with the ``essential variables." Once we find such a linear transformation, 
one can solve Equation~\eqref{eq:one}
(there will be at most one solution, since there are no more redundant variables). Then, we reduce the original problem to another SET problem 
on lower degree polynomials by subtracting the homogeneous part of largest degree. We give the details (which do not appear elsewhere in the literature) in Appendix~\ref{sec:altmain}.
In a sense, this approach is almost identical to ours. In the first step of our algorithm we solve Equation~\eqref{eq:one}
and get an affine subspace as solution. This affine subspace can be thought of as being composed of 
the space of all assignments to the ``non-essential'' variables shifted by the unique solution. Then, in the next step,
we prune this space further according to the essential variables of the degree $d-1$ part etc.
The advantages of our approach come in when trying to de-randomize SET using  deterministic PIT for restricted classes.
When following Carlini's lemma and reducing the number of variables, one  needs to solve PIT for the composition of the original circuits with a linear transformation. This is not
necessary in our approach, which has weaker PIT requirements. While some circuit classes are closed under linear transformations, this is not the case in 
general. For example, the class of sparse polynomials is not closed under linear transformations. Thus, one will not be 
able to deduce polynomial time algorithms to certain instances of SET like those that follow from our approach 
(see Corollary~\ref{cor:sparse-shift} and the discussion following it).

Another issue with the algorithm obtained from Carlini's lemma is that in each step 
of the recursion we need to subtract an affine shift of the homogeneous component of maximal degree from 
each of the polynomials. Thus, we need PIT for classes that are closed under linear combinations of polynomials from the 
class. However, some restricted circuit classes do not satisfy such closure properties. For example, when executing this 
algorithm on depth-$3$ circuits with bounded top fan-in, we may get, at some step of the algorithm,  a depth-$3$ circuit 
with unbounded top fan-in and so we will not be able to use current deterministic algorithms.

Another line of works that has some resemblance to our results is the study of black-box groups. The well-known
algorithm of Sims (see the book \cite{Seress2003}) finds a small set of generators for a permutation group given by
black-box access. Our algorithm can be seen as finding a basis for the affine space of all shifts from $f$ to $g$ so in
that sense it also finds generators for a black-box group where we do not have direct access to the group but rather to
the objects it acts upon. An interesting point is that while Sims' algorithm works by constructing the group in a ``bottom to
top'' fashion, namely starting with the identity element and slowly finding more generators, we on the other hand find
a sequence of affine spaces, each contained in the proceeding ones until we reach the final space.

\subsection{Organization}

The rest of the paper is organized as follows: in section~\ref{sec:prelim} we introduce some useful lemmas that one obtains from
having PIT for a class of circuits. In section~\ref{sec:homshifts} we introduce some properties of homogeneous components of
shifts of polynomials. In section~\ref{sec:kershifts} we define the space of shifts of a polynomial that do
not change the polynomial at all (i.e. the stabilizer) and describe some of its properties. In section~\ref{sec:main} we formally state and
prove the main theorem of this paper, describing and analyzing the algorithms for testing shift-equivalence.


\section{Preliminaries}\label{sec:prelim}

In this section, we establish some notation that will be used throughout the paper and introduce some useful lemmas
about simulation of circuits in the black-box setting. In addition, we state and prove a lemma on how to solve a linear
system of polynomial equations in the black-box (or white-box) setting, given that one has a 
black-box (or white-box) PIT algorithm for linear combinations of the polynomials in question. 
In section~\ref{sec:poleq-wb}, we show that if we are given white-box access to the input polynomials, then 
white-box PIT for linear combinations of these polynomials is enough to solve linear system of equations with these 
polynomials. On the other hand, in section~\ref{sec:poleq-bb}, if we are given black-box access to the input polynomials, 
then we show that having a hitting set is enough. Notice that although the result in section~\ref{sec:poleq-bb} seems to 
be stronger than the result in section~\ref{sec:poleq-wb}, the two results are actually not comparable, since in
section~\ref{sec:poleq-bb} we are assuming that we have a hitting set, which is a stronger assumption than only
having a white-box PIT algorithm, which is the assumption in section~\ref{sec:poleq-wb}.

From this point on, we will use boldface for vectors, and regular font for scalars. Thus, we will denote the vector
$(x_1, \ldots, x_n)$ by $\bx$ and if we want to multiply the vector
$\bx$ by a scalar $z$ we will denote this product by $z \bx$.

We will also assume that the ground field $\F$ either has characteristic zero or that its characteristic is larger than
the degree of any polynomial that we will be working with. This assumption will be crucially used throughout
sections~\ref{sec:homshifts} and \ref{sec:kershifts}. In addition, we denote the characteristic of $\F$ by $\char(\F)$.

\subsection{Interpolation in the Black-Box setting}

For many problems in algebraic computation, it is useful to work with the homogeneous components of a polynomial, instead
of directly working with the entire polynomial. In the black-box setting, we do not have direct black-box access to the
homogeneous components of the given polynomial $f$. However, the next lemma shows that from black-box access to $f$ we
can obtain black-box access to its homogeneous components $H^0(f), \ldots, H^d(f)$.

\begin{lem}\label{lem:interp}
If we are given black-box access to a circuit $C(\bx)$ that computes a polynomial $f(\bx) \in \F[\bx]$ of degree
$d$, then we can obtain black-box access to the homogeneous components of $f$.
\end{lem}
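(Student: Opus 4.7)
The plan is to give an explicit black-box simulation based on univariate interpolation along the line through the origin and the query point.

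Fix a query point $\ba \in \F^n$ at which we want to evaluate all homogeneous components $H^0(f)(\ba), \ldots, H^d(f)(\ba)$. Consider the univariate restriction $\phi(t) \triangleq f(t\ba) \in \F[t]$. Since each monomial of $f$ of total degree $k$ contributes $t^k$ when we substitute $\bx \mapsto t\ba$, we obtain
\[
\phi(t) = f(t\ba) = \sum_{k=0}^{d} H^k(f)(t\ba) = \sum_{k=0}^{d} t^{k}\cdot H^k(f)(\ba),
\]
using homogeneity of $H^k(f)$ in the last step. Thus the scalars $H^k(f)(\ba)$ are precisely the coefficients of the univariate polynomial $\phi(t)$ of degree at most $d$.

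To recover these coefficients, I would choose $d+1$ pairwise distinct scalars $t_0, t_1, \ldots, t_d \in \F$ (which is possible because we have assumed $|\F|$ is sufficiently large, and in particular $|\F| \ge d+1$), query the black box for $C$ at the $d+1$ points $t_0\ba, t_1\ba, \ldots, t_d\ba$, and use the standard Vandermonde interpolation to extract $H^0(f)(\ba), \ldots, H^d(f)(\ba)$ from the values $\phi(t_0), \ldots, \phi(t_d)$. Since the $t_i$ are distinct, the $(d+1) \times (d+1)$ Vandermonde matrix is invertible over $\F$ (this step uses no assumption on the characteristic beyond what the field-size assumption guarantees), so the inversion is well-defined and can be carried out in time $\poly(d)$. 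The total cost of one ``black-box query to $H^k(f)$'' is therefore $d+1$ queries to $C$ plus $\poly(d)$ field operations, and this simulation works uniformly for all $k \in \{0, 1, \ldots, d\}$.

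There is no real obstacle here: the only points to verify are that the parametrization $\bx = t\ba$ really does separate the homogeneous components by powers of $t$ (which is immediate from the definition of $H^k$), and that the field is large enough to contain $d+1$ distinct interpolation nodes (which is part of our standing assumption, as discussed right after Theorem~\ref{thm:non constructive hitting set}). The construction is entirely uniform in $\ba$, so it delivers genuine black-box access to each $H^k(f)$ as required.
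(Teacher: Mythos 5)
Your proof is correct and is essentially the same as the paper's: both use the identity $f(t\ba)=\sum_{k=0}^{d}t^{k}H^{k}(f)(\ba)$ together with evaluation at $d+1$ distinct scalars and Vandermonde inversion to recover the homogeneous components. The only cosmetic difference is that you make the black-box simulation explicit by fixing a query point $\ba$ first, whereas the paper writes the Vandermonde system with $\bx$ kept symbolic.
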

\begin{proof}
We know that $f(\bx) = \ds\sum_{i=0}^d H^i(f(\bx))$.
Hence, we have that $f(z \bx) = \ds\sum_{i=0}^d z^i H^i(f(\bx))$. If we let $\{\alpha_i\}_{0 \le i \le d}$ be
$d+1$ distinct elements of $\F$ (or of an extension field of $\F$) and if we evaluate $C$ on the points $\alpha_i \bx$
we obtain the following equality:

\[
\begin{pmatrix}
1 & \alpha_0 & \alpha_0^2 & \ldots & \alpha_0^d \\
1 & \alpha_1 & \alpha_1^2 & \ldots & \alpha_1^d \\
1 & \alpha_2 & \alpha_2^2 & \ldots & \alpha_2^d \\
\vdots & \vdots & \vdots & \vdots & \vdots \\
1 & \alpha_d & \alpha_d^2 & \ldots & \alpha_d^d
\end{pmatrix} \cdot
\begin{pmatrix}
H^0(f(\bx)) \\ H^1(f(\bx)) \\ H^2(f(\bx)) \\ \vdots \\ H^d(f(\bx))
\end{pmatrix} =
\begin{pmatrix}
f(\alpha_0 \bx) \\ f(\alpha_1 \bx) \\ f(\alpha_2 \bx) \\ \vdots \\ f(\alpha_d \bx)
\end{pmatrix}\]

The matrix on the left side is a Vandermonde matrix, which is known to be invertible. Hence, by left-multiplying by
its inverse we obtain:

\[
\begin{pmatrix}
1 & \alpha_0 & \alpha_0^2 & \ldots & \alpha_0^d \\
1 & \alpha_1 & \alpha_1^2 & \ldots & \alpha_1^d \\
1 & \alpha_2 & \alpha_2^2 & \ldots & \alpha_2^d \\
\vdots & \vdots & \vdots & \vdots & \vdots \\
1 & \alpha_d & \alpha_d^2 & \ldots & \alpha_d^d
\end{pmatrix}^{-1} \cdot \begin{pmatrix}
f(\alpha_0 \bx) \\ f(\alpha_1 \bx) \\ f(\alpha_2 \bx) \\ \vdots \\ f(\alpha_d \bx)
\end{pmatrix} =
\begin{pmatrix}
H^0(f(\bx)) \\ H^1(f(\bx)) \\ H^2(f(\bx)) \\ \vdots \\ H^d(f(\bx))
\end{pmatrix}
\]

Since we have black-box access to the values $f(\alpha_i \bx)$ through the circuit $C$, we
also have black-box access to the homogeneous components of $f$ through this construction.
\end{proof}

\subsection{Finding linear dependencies among polynomials in the White-Box Setting}\label{sec:poleq-wb}

Suppose we have explicit access to the circuits computing the polynomials $g, h_1, h_2, \ldots, h_k \in \F[\bx]$. Then, how 
can we decide whether $g$ is in the linear span of $h_1, h_2, \ldots, h_k$? That is, does there exist an
$\ba \in \F^k$ such that  $g(\bx) \equiv \sum_{i=1}^k a_i h_i(\bx)$? Notice that we cannot try to solve a linear system for 
each possible monomial of $g$, since this process might lead us to an exponential number of equations.

In this subsection we answer the question above, assuming that we have a white-box PIT algorithm that hits the 
$\F$-span of the polynomials $g, h_1, \ldots, h_k$, that is, polynomials of the form 
$a_0 g(\bx) + \sum_{i=1}^k a_i h_i(\bx)$, where $a_i \in \F, \ 0 \le i \le k$. Moreover, we can find such a linear combination, 
if it exists.

\begin{lem}[Decision to search reduction for white-box PIT]\label{lem:PIT-decision-to-search}
Given an arithmetic circuit $C$ computing a non-zero $n$-variate polynomial $f$ of degree $d$, and a white-box
deterministic PIT algorithm that runs in polynomial time, we can find, in deterministic
polynomial time, a point $\ba\in\F^n$ such that $f(\ba)\neq 0$.
\end{lem}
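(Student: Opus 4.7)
The plan is to perform the standard variable-by-variable hybrid argument, using the deterministic white-box PIT algorithm as a non-zeroness oracle at each step. I will fix the coordinates of $\ba$ one at a time, maintaining the invariant that after fixing $a_1,\ldots,a_{i-1}$, the partially-substituted polynomial
$$f_i(x_i,\ldots,x_n) \triangleq f(a_1,\ldots,a_{i-1},x_i,\ldots,x_n)$$
is still a non-zero polynomial. The base case $i=1$ is just the hypothesis that $f \not\equiv 0$.

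First I would fix a set $S \subseteq \F$ of size $d+1$ (if $|\F| \leq d$, I move to a sufficiently large extension field; by our standing assumption the field is large enough, so this is free). At iteration $i$, I form, for each $\alpha \in S$, the circuit
$$C_{i,\alpha}(x_{i+1},\ldots,x_n) \triangleq C(a_1,\ldots,a_{i-1},\alpha, x_{i+1},\ldots,x_n),$$
obtained by hard-wiring the previously chosen constants and $\alpha$ into $C$. Each $C_{i,\alpha}$ is an arithmetic circuit of size at most $\size(C)$ computing a polynomial of degree at most $d$ in $n-i$ variables, and I construct it explicitly in linear time. Now I run the given white-box deterministic PIT algorithm on each of the $d+1$ circuits $\{C_{i,\alpha}\}_{\alpha \in S}$ and set $a_i$ to be any $\alpha$ for which the algorithm declares $C_{i,\alpha}$ non-zero.

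The correctness of this step relies on the univariate Schwartz--Zippel fact applied to the polynomial
$$h(y) \triangleq f_i(y, x_{i+1},\ldots,x_n) \in (\F[x_{i+1},\ldots,x_n])[y],$$
which is non-zero by the inductive invariant and has $y$-degree at most $d$. Hence $h(\alpha) \neq 0$ for at least $|S| - d = 1$ values of $\alpha \in S$, so at least one choice of $a_i$ succeeds and the invariant is preserved. After $n$ rounds $f_{n+1}$ is a non-zero constant, i.e.\ $f(\ba) \neq 0$, as required.

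For the complexity accounting, we make $n(d+1)$ calls to the PIT algorithm, each on a circuit of size $O(\size(C))$, plus $O(n \cdot d \cdot \size(C))$ work to form the substituted circuits, so the total running time is polynomial in $n$, $d$, $\size(C)$, and the running time of the PIT algorithm. There is no real obstacle in the proof; the only subtlety worth flagging is the field-size requirement $|\F| \geq d+1$, which we handle by passing to an extension field exactly as done elsewhere in the paper.
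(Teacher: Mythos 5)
Your proof is correct and follows essentially the same approach as the paper: iteratively fix each coordinate by scanning over a set of $d+1$ field elements, using the white-box PIT algorithm to certify that the partially-substituted circuit remains non-zero. The only differences are cosmetic (you spell out the inductive invariant and the complexity accounting more explicitly).
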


\begin{proof}
Let $S=\{a_0,\ldots,a_d\}$ be a set of $d+1$ distinct values from $\F$. Notice that we can check, using the PIT algorithm,
whether the restriction $x_1 = a_i \in S$ makes $f$ vanish. Since the degree of
$f$ is $d$ and $f\not\equiv 0$, there exists a value of $a_i \in S$ such that $f(a_i, x_2, \ldots, x_n) \not\equiv 0$. Hence, by
a linear scan over $S$ we can find such an index $0 \le i \le d$ such that $f(a_i, x_2, \ldots, x_n) \not\equiv 0$.
Fix $x_1 = a_i$ and repeat this procedure with the other variables $\{x_2, \ldots, x_n\}$.
The running time is clearly bounded by $nd$ times the running time of the PIT algorithm.
\end{proof}

\begin{lem}\label{lem:poleq-wb}
Suppose we are given circuits computing the polynomials $g, h_1, h_2, \ldots, h_k \in \F[\bx]$. Assume further that
we have a deterministic white-box PIT algorithm for linear combinations of  $g, h_1, h_2, \ldots, h_k$.
Then, there exists a deterministic algorithm, with running time polynomial in the sizes of the circuits and $k$,
which decides whether there exists $\ba \in {\F}^k$ such that $\sum_{i=1}^k a_i h_i(\bx)\equiv g(\bx)$.
Moreover, the algorithm will output such an $\ba$, if there exists one.
\end{lem}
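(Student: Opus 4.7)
The plan is to reduce the polynomial-identity question to ordinary linear algebra by iteratively building a small set of evaluation points that captures all $\F$-linear dependencies among $g, h_1, \ldots, h_k$. Homogenize the problem: for $\bc = (c_0, c_1, \ldots, c_k) \in \F^{k+1}$, let $P_\bc(\bx) := c_0 g(\bx) + \sum_{i=1}^k c_i h_i(\bx)$, and let $K := \{\bc \in \F^{k+1} : P_\bc \equiv 0\}$ be the space of identities. Once $K$ is known, a vector $\ba \in \F^k$ with $\sum_i a_i h_i \equiv g$ exists iff $K$ contains an element whose first coordinate equals $-1$, and such an $\ba$ (if any) is extracted by elementary linear algebra in the ambient space $\F^{k+1}$.

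To compute $K$, I maintain a finite set $U \subseteq \F^n$ of evaluation points together with its evaluation kernel $K_U := \{\bc \in \F^{k+1} : P_\bc(\by) = 0 \text{ for every } \by \in U\}$, obtained by solving a $|U| \times (k+1)$ linear system. Note $K \subseteq K_U$ for every $U$. The iteration is: compute a basis $\bc^{(1)}, \ldots, \bc^{(r)}$ of $K_U$; for each $j$, construct (via white-box access to $g, h_1, \ldots, h_k$) the circuit computing the linear combination $P_{\bc^{(j)}}$, and apply the given white-box PIT algorithm to test whether $P_{\bc^{(j)}} \equiv 0$. If every test returns ``identically zero'', then $K_U \subseteq K$ by linearity, so $K_U = K$ and we exit. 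Otherwise, pick some $j$ with $P_{\bc^{(j)}} \not\equiv 0$ and use Lemma~\ref{lem:PIT-decision-to-search} to find a point $\by$ with $P_{\bc^{(j)}}(\by) \neq 0$; append $\by$ to $U$ and repeat.

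Correctness and efficiency follow by a simple dimension argument: appending such a $\by$ forces $\bc^{(j)} \in K_U \setminus K_{U \cup \{\by\}}$, strictly decreasing $\dim K_U$, so the loop terminates after at most $k+1$ rounds. Each round costs polynomial time in $n$, $k$, and the circuit sizes, plus $O(k)$ calls to the PIT oracle and to the decision-to-search procedure of Lemma~\ref{lem:PIT-decision-to-search}. The main subtlety is verifying $K_U = K$ at the termination step: any $\bc \in K_U$ is an $\F$-combination of the basis vectors $\bc^{(j)}$, each of which now satisfies $P_{\bc^{(j)}} \equiv 0$ by the PIT tests, hence $P_\bc \equiv 0$ by linearity and $\bc \in K$. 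With $K$ in hand, a final elementary linear algebra step decides whether $K$ meets the affine hyperplane $\{c_0 = -1\}$, outputting the corresponding $\ba$ or reporting that no solution exists.
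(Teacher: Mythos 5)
Your proof is correct, but it takes a genuinely different route from the paper's. The paper scans greedily through $h_1, \ldots, h_k$ in order, maintaining a running linearly-independent subset and a matching set of evaluation points: at step $\ell$ it forms the determinant of a small matrix whose last column contains the polynomials themselves, observes that this determinant is itself a linear combination of the $h_i$, and applies PIT plus the decision-to-search reduction (Lemma~\ref{lem:PIT-decision-to-search}) to either extend the basis or discard $h_\ell$; $g$ is only compared against the resulting basis in a final verification step. You instead homogenize the problem into the $(k+1)$-dimensional kernel $K$ and run a kernel-refinement loop: maintain the evaluation kernel $K_U$, PIT-test a basis of $K_U$, and either certify $K_U = K$ or shrink it by finding a witness point via Lemma~\ref{lem:PIT-decision-to-search}. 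Termination then follows immediately from a clean dimension-decrease argument, and the algorithm is symmetric in all $k+1$ polynomials rather than singling out $g$. The trade-off is efficiency: the paper's scan makes $O(k)$ PIT/search calls total, whereas you may re-test up to $k+1$ basis vectors in each of up to $k+1$ rounds, giving $O(k^2)$ calls. Both are polynomial, so the lemma holds either way; your version is conceptually tidier, the paper's is leaner on oracle calls.
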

\begin{proof}
We will be relying on the decision-to-search reduction of Lemma~\ref{lem:PIT-decision-to-search} and we will use it
implicitly throughout in the proof.

As a first step, find a point $\ba_1$ such that $h_1(\ba_1)\neq 0$. Next, find a point $\ba_2$ such that
$h_1(\ba_2)h_2(\ba_1)-h_1(\ba_1)h_2(\ba_2) \neq 0$.  If no such point $\ba_2$ exists then we can discard $h_2$,
since in this case $h_2$ will be in the span of $h_1$. If $h_1(\bx)h_2(\ba_1)-h_1(\ba_1)h_2(\bx) \not\equiv 0$, then
by Lemma~\ref{lem:PIT-decision-to-search} we can find such $\ba_2$. That is why we can discard $h_2$ in case we are not
able to find such a point. Proceed in this manner until we have scanned through all $h_1,\ldots, h_k$. More accurately, 
assume (wlog) that the polynomials $h_1, \ldots, h_{c}$, for some $c<\ell$, are linearly independent and their span 
contains the polynomials $h_1,\ldots,h_{\ell-1}$. At the $\ell^{th}$ step
we consider the $c+1 \times c+1$ matrix $M_\ell$ that is defined as follows:  
$M_\ell[i,j] = \begin{cases} h_i(a_j), \text{ if } j \leq c \\ h_i(\bx), \text{ if } j = c+1\end{cases}$. We
find a point $\ba_\ell$ for which the determinant of $M_\ell$ is non-zero.
Notice that this determinant is merely a linear combination of the polynomials $h_1, \ldots, h_\ell$, hence we have PIT for this 
polynomial and we can find such point $\ba_\ell$, if one exists. 

W.l.o.g., we can assume that $h_1, \ldots, h_r$ form a basis for the space defined by the $\F$-span of the
polynomials $h_1, \ldots, h_k$. Hence, by our linear scan through the $h_i$'s, we have found $\ba_1,\ldots,\ba_r$ such
that the $r\times r$ matrix $H$ having in its $(i,j)^{th}$ entry the value $h_i(\ba_j)$ is full rank.

We now evaluate the polynomial $g$ on all those $r$ points and find the unique linear combination yielding
$\sum_{i=1}^{r}b_i h_i(\ba_j) = g(\ba_j)$ for $1\leq j\leq r$. Notice that we can find $\bb \in \F^r$ by solving a system of
linear equations over $\F$. This vector $\bb$ will be unique since $H$ is full rank. Notice that, by uniqueness of $\bb$
and by the fact that $h_1, \ldots, h_r$ form a basis
for the linear span of the $h_i$'s, we have that $g$ is a linear combination of the $h_i$'s if, and only if,
$\sum_{i=1}^{r}b_i h_i(\bx) \equiv g(\bx)$. Hence, all we need to do is to check whether
$\sum_{i=1}^{r}b_i h_i(\bx) \equiv g(\bx)$. We can test this polynomial equality by running our PIT algorithm on the
polynomial $g(\bx) - \sum_{i=1}^{r}b_i h_i(\bx)$. If the PIT algorithm returns that this polynomial is the zero polynomial,
then we found a linear combination. Otherwise, the algorithm returns that there exists no linear combination.
\end{proof}

\subsection{Finding linear dependencies among polynomials in the Black-Box Setting}\label{sec:poleq-bb}

Here we assume that we only have black-box access to polynomials $g, h_1, h_2, \ldots, h_k \in \F[\bx]$ and we wish 
to solve the 
same question as the one posed in the previous subsection, assuming a black-box PIT. We first note that the proof of 
Lemma~\ref{lem:poleq-wb} also works in the black-box case, but since we have a stronger assumption, namely, a hitting set 
rather than a white-box PIT algorithm, we have a more direct solution: We shall find a set of points $S$ for which the equation
$g(\bx) \equiv \sum_{i=1}^k a_i h_i(\bx)$ is true if, and only if, $g(\bc) \equiv \sum_{i=1}^k a_i h_i(\bc)$ for every $\bc \in S$.

It turns out that if we have a hitting set $\cH$ that hits the $\F$-span of the polynomials
$g, h_1, \ldots, h_k$, then the points of $\cH$ give the required set $S$.

The following lemma states formally the answer to the question above:

\begin{lem}\label{lem:poleq-bb}
Suppose we have black-box access to polynomials $g, h_1, h_2, \ldots, h_k \in \F[\bx]$ and that we
have a hitting set $\cH$ that hits the $\F$-span of the polynomials $g, h_1, h_2, \ldots, h_k$. Then,
there exists a deterministic algorithm, with running time polynomial in $|\cH|$ and $k$, which decides whether there
exists $\ba \in \F^k$ such that $g(x) \equiv \sum_{i=1}^k a_i h_i(\bx)$. Moreover, the algorithm will output
such an $\ba$, if one exists.
\end{lem}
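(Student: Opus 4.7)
The plan is to reduce the functional question ``is $g$ in the $\F$-linear span of $h_1,\ldots,h_k$?'' to an ordinary linear algebra problem over $\F$ by sampling the polynomials at every point of the hitting set. Concretely, I would first query the black-boxes to build the $|\cH| \times k$ matrix $H$ whose rows are indexed by points $\bc \in \cH$ with entries $H[\bc, i] = h_i(\bc)$, together with the column vector $v_g \in \F^{|\cH|}$ defined by $v_g[\bc] = g(\bc)$. This costs $(k+1)\cdot|\cH|$ black-box evaluations. Then I would solve the linear system $H \ba = v_g$ over $\F$ by Gaussian elimination; if no solution exists the algorithm returns FAIL, and otherwise it returns any solution $\ba$.

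The correctness relies on the hitting set hypothesis. In one direction, if $g(\bx) \equiv \sum_i a_i h_i(\bx)$ as polynomials, then certainly $v_g = H\ba$, so the linear system is solvable. In the other direction, suppose the algorithm finds some $\ba \in \F^k$ with $H\ba = v_g$. Consider the polynomial
\[
q(\bx) \;\triangleq\; g(\bx) - \sum_{i=1}^{k} a_i\, h_i(\bx).
\]
By construction $q$ lies in the $\F$-span of $\{g, h_1, \ldots, h_k\}$, and by the defining equation $H\ba = v_g$ we have $q(\bc) = 0$ for every $\bc \in \cH$. Since $\cH$ hits that span, a polynomial from the span that vanishes on all of $\cH$ must be identically zero, and hence $g \equiv \sum_i a_i h_i$ as required. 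This is also exactly the argument that guarantees the ``if no solution exists to the linear system then no polynomial identity holds'' direction: any valid polynomial identity yields an $\ba$ with $H\ba = v_g$.

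The running time bound is immediate: we perform $O(k\cdot |\cH|)$ black-box queries and then a Gaussian elimination on a matrix of size $|\cH|\times k$, which takes $\poly(|\cH|, k)$ field operations. The main thing to be careful about is just the logical equivalence above --- namely that the hitting set property is used for the $\F$-span of $\{g,h_1,\ldots,h_k\}$ (not merely for a single fixed polynomial), which is precisely what the hypothesis provides. No deeper obstacle appears; the lemma is essentially a direct consequence of the definition of a hitting set together with one round of linear algebra.
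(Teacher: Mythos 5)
Your proposal is correct and follows essentially the same route as the paper: evaluate all polynomials on the hitting set, reduce to a linear system over $\F$, and use the hitting-set property applied to the polynomial $g - \sum_i a_i h_i$ (which lies in the span) to argue that a solution of the linear system yields a genuine polynomial identity. The only superficial difference is that you phrase the linear algebra as a matrix equation $H\ba = v_g$ while the paper writes it as equality of evaluation vectors $\bv_g = \sum_i a_i \bv_{h_i}$; these are the same thing.
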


\begin{proof}
Let $s = |\cH|$ and let $\bc_1, \bc_2, \ldots, \bc_s$ be an arbitrary ordering of the elements of $\cH$.
For a polynomial $f \in \F[\bx]$, define the vector $\bv_f \in \F^s$ as follows:
$\bv_f = (f(\bc_1), f(\bc_2), \ldots, f(\bc_s))^T.$ Then, it is enough to prove
the following equivalence: $g(\bx) \equiv \sum_{i=1}^k a_i h_i(\bx)$ if, and only if, $\bv_g = \sum_{i=1}^k a_i \bv_{h_i}$.
This implies the lemma, since given the polynomials $g, h_1, \ldots, h_k$ and $\cH$, we can construct the
vectors $\bv_g, \bv_{h_1}, \ldots, \bv_{h_k}$ and just solve the system of linear equations
$\bv_g = \sum_{i=1}^k a_i \bv_{h_i}$, where the $a_i$'s are the unknowns.

Here is the proof of the equivalence above: $g(\bx) \equiv \sum_{i=1}^k a_i h_i(\bx)$ implies that
$g(\bc_r) = \sum_{i=1}^k a_i h_i(\bc_r)$ for all $r \in [s]$, which implies that $\bv_g = \sum_{i=1}^k a_i \bv_{h_i}$.
On the other hand, if $\bv_g = \sum_{i=1}^k a_i \bv_{h_i}$, then we have $\bv_g - \sum_{i=1}^k a_i \bv_{h_i} = 0$, which
implies that $g(\bc_r) - \sum_{i=1}^k a_i h_i(\bc_r) = 0$, for all $r \in [s]$. Since $\cH$ hits linear combinations of
$g, h_1, \ldots, h_k$, the last set of equalities implies that the polynomial $g(\bx) - \sum_{i=1}^k a_i h_i(\bx)$ vanishes
on all points of $\cH$, and therefore it must be the zero polynomial. This implies that
$g(\bx) \equiv \sum_{i=1}^k a_i h_i(\bx)$ and proves the lemma.
\end{proof}

Now, what if we do not have such a hitting set $\cH$, but we are allowed randomness? Then, we can still 
answer the question above in the positive, with high probability, and find such a linear combination if one exists. 
More formally, we have:

\begin{lem}\label{lem:poleq-rand}
Suppose we have black-box access to polynomials $g, h_1, h_2, \ldots, h_k \in \F[\bx]$ and an upper bound $d$ on their
degrees.  Let $0 < \veps < 1$. Then,
there exists a randomized algorithm, with running time $\poly(d, \log(1/\veps), k)$, which decides correctly with
probability at least $1-\veps$ whether there exists $\ba \in \F^k$ such that $g(x) \equiv \sum_{i=1}^k a_i h_i(\bx)$. 
Moreover, with the same error probability the algorithm will output such an $\ba$, if one exists.
\end{lem}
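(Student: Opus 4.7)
The plan is to randomize the construction of the hitting set in Lemma~\ref{lem:poleq-bb} via the Schwartz-Zippel lemma. Concretely, let $T \subseteq \F$ be any fixed set of size $|T| = \lceil (k+1)d/\veps \rceil$, and sample $s = k+1$ points $\bc_1, \ldots, \bc_s \in T^n$ uniformly and independently at random. Following the pattern of Lemma~\ref{lem:poleq-bb}, form the vectors $\bv_f = (f(\bc_1), \ldots, f(\bc_s))^T \in \F^s$ for each of $g, h_1, \ldots, h_k$ by querying the black boxes, solve the linear system $\bv_g = \sum_{i=1}^k a_i \bv_{h_i}$ over $\F$, and return any solution if it exists or FAIL if the system is inconsistent.

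For correctness, let $V = \text{span}_\F(g, h_1, \ldots, h_k) \subseteq \F[\bx]$ with $r = \dim V \leq k+1$, and fix any basis $p_1, \ldots, p_r$ of $V$. The evaluation map $\Phi : V \to \F^s$ given by $f \mapsto \bv_f$ is linear, and it is injective exactly when the $r \times r$ matrix $M$ with entries $M_{ij} = p_j(\bc_i)$ (using the first $r$ sample points) is non-singular. Viewed as a polynomial in the $nr$ coordinates of $\bc_1, \ldots, \bc_r$, the polynomial $\det(M) = \sum_{\sigma \in S_r} \text{sgn}(\sigma) \prod_{i=1}^r p_{\sigma(i)}(\bc_i)$ has total degree at most $rd \leq (k+1)d$, and it is not identically zero: by the same inductive construction used in the proof of Lemma~\ref{lem:poleq-wb}, the linear independence of $p_1, \ldots, p_r$ guarantees the existence of points $\bc_1, \ldots, \bc_r$ at which $M$ is invertible. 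Applying Lemma~\ref{lem:sz} to $\det(M)$ then gives $\Pr[\det(M) = 0] \leq (k+1)d/|T| \leq \veps$.

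Conditioning on the event that $\Phi$ is injective, which occurs with probability at least $1 - \veps$, the algorithm is correct: if $g \equiv \sum_i a_i h_i$ for some $\ba \in \F^k$, then $\bv_g = \sum_i a_i \bv_{h_i}$ so the system is consistent, and any solution $\ba'$ must satisfy $\sum_i (a_i - a'_i) h_i \in \ker \Phi = \{0\}$, so $\sum_i a'_i h_i \equiv g$ as polynomials; conversely, if no such $\ba$ exists, then by injectivity $\bv_g \notin \text{span}(\bv_{h_1}, \ldots, \bv_{h_k})$ and the system is inconsistent. The running time is $\poly(n, d, k, \log(1/\veps))$, dominated by $O(k)$ black-box evaluations, the sampling of $n(k+1)$ field elements with bit-length $O(\log(kd/\veps))$, and Gaussian elimination on a $(k+1) \times k$ linear system, matching the bound in the statement. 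I do not expect a real obstacle here: the only substantive point is the non-vanishing of $\det(M)$, which is inherited essentially verbatim from the inductive argument already used in the white-box proof of Lemma~\ref{lem:poleq-wb}; everything else is a direct combination of that approach with Schwartz-Zippel.
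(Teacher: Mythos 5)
Your proof is correct, and it takes a genuinely different route from the paper. The paper's proof of Lemma~\ref{lem:poleq-rand} mirrors the iterative white-box argument of Lemma~\ref{lem:poleq-wb}: it draws fresh random points one at a time, at each step testing (via a single application of Lemma~\ref{lem:sz}) whether the new $h_\ell$ leaves the span of the previously retained ones, then solves a linear system over the resulting evaluation points, and \emph{finally performs one more randomized PIT} on $g - \sum b_i h_i$ to catch any mistakes accumulated during the adaptive pruning; a union bound over the $O(k)$ applications of Lemma~\ref{lem:sz} gives total error $\veps$. Your argument is non-adaptive: sample all $k+1$ points up front, form the evaluation vectors, and solve the linear system directly, with a single Schwartz--Zippel application (to $\det M$ viewed as a polynomial in the $nr$ joint coordinates of $\bc_1,\dots,\bc_r$) showing that the evaluation map on $V = \mathrm{span}(g,h_1,\dots,h_k)$ is injective except with probability $\veps$. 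This is cleaner in two ways: it eliminates the final PIT check entirely (injectivity of $\Phi$ already guarantees that solutions of the linear system are solutions of the polynomial identity, and that inconsistency of the system certifies non-membership), and it replaces $O(k)$ probability bounds with one. In effect you are reproving the non-constructive hitting-set bound of Theorem~\ref{thm:non constructive hitting set} specialized to the span of the given polynomials, which is an attractive way to package the argument.

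One small imprecision worth fixing: you write that $\Phi: V \to \F^s$ is injective ``exactly when'' the $r \times r$ matrix built from the \emph{first} $r$ sample points is nonsingular. Since $s = k+1 \ge r$, $\Phi$ can be injective even when that particular leading submatrix is singular (some other $r$ of the $s$ rows could span); the correct statement is that nonsingularity of $M$ is \emph{sufficient} for injectivity, which is the only direction you actually use. Also, as you correctly noted (and unlike the literal statement of the lemma in the paper), the running time does depend polynomially on $n$, since one must sample and store $n(k+1)$ field elements; this dependence is implicit in the paper's statement.
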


\begin{proof}
The proof of this lemma is similar to the proof of the white-box case, the difference being in the fact that we will choose our
evaluation points according to Lemma~\ref{lem:sz}. Let $S$ be a set of size $|S| = \lfloor \frac{2dk}{\veps} \rfloor $. 
  
As a first step, pick a point $\ba_1$ at random from $S^n$. By Lemma~\ref{lem:sz}, if $h_1 \not\equiv 0$ then 
$h_1(\ba_1) = 0$ with probability $\le \veps/2k$. If $h_1(\ba_1) = 0$ but $h_1 \not\equiv 0$, then we will just 
assume that $h_1 \equiv 0$ and we will discard it (and in this part that our algorithm may make a mistake). 
Next, pick a point $\ba_2$ at random from $S^n$. Again, by Lemma~\ref{lem:sz}, if
$h_1(\ba_1)h_2(\bx)-h_1(\bx)h_2(\ba_1) \not\equiv 0$ then $h_1(\ba_2)h_2(\ba_1)-h_1(\ba_1)h_2(\ba_2) = 0$ 
with probability $\le \veps/2k$. If $h_1(\ba_2)h_2(\ba_1)-h_1(\ba_1)h_2(\ba_2) = 0$ we will always assume that $h_2$ is in 
the span of $h_1$ and thereby we will discard $h_2$ (in this part our algorithm may again make a mistake).
We thus proceed in this manner, following the footsteps of the proof of Lemma~\ref{lem:poleq-wb}.

As before we assume (wlog) that $h_1, \ldots, h_r$ form a basis for the space defined by the $\F$-span of the
polynomials $h_1, \ldots, h_k$. Hence, by our linear scan through the $h_i$'s, we have found $\ba_1,\ldots,\ba_r$ such
that the $r\times r$ matrix $H$ having in its $(i,j)^{th}$ entry the value $h_i(\ba_j)$ is full rank. The probability that we made 
a mistake until this point will be $\le r \veps/2k \le \veps/2$, by the union bound.

We continue as in the proof of Lemma~\ref{lem:poleq-wb}. Assuming that we made no mistake so far, we can find (by 
solving linear equations over $\F$) the unique point $\bb$ such that $g$ is a linear combination of the $h_i$'s if, and only if,
$\sum_{i=1}^{r}b_i h_i(\bx) \equiv g(\bx)$. Hence, all we need to do is to check whether
$\sum_{i=1}^{r}b_i h_i(\bx) \equiv g(\bx)$. We can test this polynomial equality by applying Lemma~\ref{lem:sz} on the
polynomial $g(\bx) - \sum_{i=1}^{r}b_i h_i(\bx)$, again drawing the point at random from $S^n$. 
If the PIT algorithm returns that this polynomial is the zero polynomial, then we found a linear combination. 
Otherwise, the algorithm returns that there exists no linear combination. The probability of the PIT making a mistake at this
step is $\le \veps/2k \le \veps/2$. Hence, the total error of the entire algorithm is bounded by $\veps/2 + \veps/2 = \veps$, as
claimed.
\end{proof}

\section{Homogeneous Components of Shifts of a Polynomial}\label{sec:homshifts}

In this section we describe some properties of the homogeneous components of a shift of a polynomial.
Throughout this section, let $f(\bx) \in \F[\bx]$ be a polynomial of degree $d$, $\ba \in \F^n$ be a point
and $f(\bx+\ba)$ be a shift of $f$.
In general, when the field $\F$ is such that $\char(\F) = 0$ or $\char(\F) > d$,
the homogeneous components of $f(\bx+\ba)$ can be expressed
as a linear combination of the appropriate (formal) directional derivatives of the homogeneous components of $f$
on the direction $\ba$. Before we state these properties more formally, we will need the following definitions:

\begin{define}[Directional Derivatives]\label{def:dirder}
The (formal) directional derivative of $f(\bx) \in \F[\bx]$ of order $1$
on the direction $\ba$ is given by the following formula:
\begin{align}\label{eq:dirder-1}
f^{(1)}(\ba,\bx) \triangleq \ds\sum_{t=1}^{n}
a_{t}  \cdot \frac{\D f}{\D x_{t}}(\bx).
\end{align}
More, generally, The (formal) directional derivative of $f(\bx) \in \F[\bx]$ of order
$r$ on the
direction $\ba$ is given by the following formula:
\begin{align}\label{eq:dirder}
f^{(r)}(\ba,\bx) \triangleq \ds\sum_{\be \in [n]^r}
\left( \prod_{k=1}^r a_{e_k} \right) \cdot \frac{\D^{r} f}{\D x_{e_1} \ldots \D x_{e_r}}(\bx)
\end{align}
where we define $f^{(0)}(\ba, \bx) \triangleq f(\bx)$. If $f$ is not homogeneous, for each homogeneous component
$H^\ell(f)$ of $f$ we define:
\begin{align}
f_\ell^{(r)}(\ba,\bx) \triangleq \ds\sum_{\be \in [n]^r}
\left( \prod_{k=1}^r a_{e_k} \right) \cdot \frac{\D^{r} H^\ell(f)}{\D x_{e_1} \ldots \D x_{e_r}}(\bx).
\end{align}
\end{define}

Note that equation~\eqref{eq:dirder} in definition~\ref{def:dirder} agrees with the usual notion of
directional derivatives in the continuous setting. We define $f_\ell^{(r)}(\ba,\bx)$ to simplify the statement of
Lemma~\ref{lem:taylor}.
From this definition, and by using the fact that the degree of $f$ is smaller than $\char(\F)$,
it is easy to see the following observations:

\begin{obs}\label{obs:der via hom}
$f_i^{(1)}(\ba,\bx)=H^{(i-1)}(f_i(\bx+\ba))$.
Thus, $f^{(1)}(\ba,\bx)=\sum_{i=1}^{\deg(f)}H^{(i-1)}(f_i(\bx+\ba))$.
\end{obs}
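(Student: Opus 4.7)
The plan is to derive the observation from the polynomial Taylor expansion. For a homogeneous polynomial $p$ of degree $i$, the identity
\[ p(\bx+\ba) = \sum_{\alpha:\,|\alpha|\le i} \frac{\ba^{\alpha}}{\alpha!}\,\partial^{\alpha} p(\bx), \]
is valid as a formal identity in $\F[\bx]$ whenever $\char(\F)=0$ or $\char(\F)>i$, which is guaranteed by the standing assumption of Section~\ref{sec:homshifts}. Grouping the multi-indices by order and invoking the multinomial identity rewrites this as
\[ p(\bx+\ba) = \sum_{r=0}^{i} \frac{1}{r!}\sum_{\be\in[n]^r}\Bigl(\prod_{k=1}^r a_{e_k}\Bigr) \frac{\D^r p}{\D x_{e_1}\cdots\D x_{e_r}}(\bx). \]

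Applying this with $p = H^{i}(f) = f_i$ (which is homogeneous of degree $i$, hence $\partial^{\alpha} p$ is homogeneous of degree $i-|\alpha|$), each inner sum on the right is homogeneous in $\bx$ of degree $i-r$, so the Taylor expansion is already the homogeneous decomposition of $f_i(\bx+\ba)$. In particular, the degree $i-1$ component corresponds to $r=1$, and since $1/1!=1$ no division is required:
\[ H^{i-1}\bigl(f_i(\bx+\ba)\bigr) = \sum_{t=1}^{n} a_{t}\cdot\frac{\D f_i}{\D x_t}(\bx) = f_i^{(1)}(\ba,\bx), \]
which is the first equality.

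For the second equality I would simply invoke linearity. Writing $f = \sum_{i=0}^{\deg(f)} f_i$, the directional derivative commutes with the sum and with partial differentiation, so $f^{(1)}(\ba,\bx) = \sum_{i} f_i^{(1)}(\ba,\bx)$, and substituting the first equality finishes the claim (the $i=0$ term vanishes because $f_0$ is constant). I do not anticipate any real obstacle here; the only technical point worth flagging is the characteristic hypothesis, which is needed to make the Taylor expansion meaningful, and it is already built into the assumptions of the section. If one preferred to avoid Taylor's formula altogether, an equally short route is a direct check on monomials: for $\bx^{\alpha}$ of total degree $i$ the identity $(\bx+\ba)^{\alpha} = \bx^{\alpha} + \sum_t a_t \alpha_t \bx^{\alpha - e_t} + (\text{lower order in }\bx)$ gives the degree $i-1$ component explicitly as $\sum_t a_t \cdot \partial \bx^{\alpha}/\partial x_t$, and linearity in $\alpha$ extends this to arbitrary $f_i$.
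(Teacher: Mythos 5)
Your proof is correct. The paper does not write out a proof of this observation (it is labeled ``easy to see''), but your Taylor-expansion argument is precisely the $r=1$ specialization of the argument the paper later uses for the more general Lemma~\ref{lem:taylor}, whose proof works monomial-by-monomial and then extends by linearity --- exactly the alternative route you flag at the end. Both your primary argument and your alternative are sound; you correctly identify that the characteristic hypothesis ($\char(\F)=0$ or $\char(\F)>\deg(f)$) is what makes the formal Taylor expansion valid, and that the $r=1$ coefficient $1/1!=1$ means no division is actually needed for this particular observation.
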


\begin{obs}\label{obs:derivative}
The polynomials $f^{(r)}(\ba,\bx)$ have the following recursive structure:
\begin{equation}\label{eq:der}
f^{(r+1)}(\ba,\bx) \equiv \sum_{j=1}^n a_j\cdot \frac{\D (f^{(r)}(\ba, \bx))}{\D x_j}.
\end{equation}
\end{obs}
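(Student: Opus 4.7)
The plan is a direct bookkeeping argument starting from the definition of $f^{(r)}$. First I would expand the right-hand side of \eqref{eq:der} by substituting the definition
$$f^{(r)}(\ba,\bx) = \sum_{\be \in [n]^r} \left(\prod_{k=1}^{r} a_{e_k}\right) \cdot \frac{\D^r f}{\D x_{e_1} \cdots \D x_{e_r}}(\bx),$$
pull the outer derivative $\D/\D x_j$ inside the sum (each term is a polynomial, so this is immediate), and observe that the coefficients $\prod_k a_{e_k}$ do not depend on $\bx$ and therefore pass through the derivative. This yields
$$\sum_{j=1}^{n} a_j \cdot \frac{\D (f^{(r)}(\ba,\bx))}{\D x_j} = \sum_{j=1}^{n} \sum_{\be \in [n]^r} a_j \left(\prod_{k=1}^{r} a_{e_k}\right) \cdot \frac{\D^{r+1} f}{\D x_j \, \D x_{e_1}\cdots \D x_{e_r}}(\bx).$$

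The key step is then a reindexing of the double sum: a pair $(j,\be)$ with $j \in [n]$ and $\be \in [n]^r$ is in bijection with a single tuple $\be' = (j,e_1,\ldots,e_r) \in [n]^{r+1}$. Under this bijection the product of the $a$'s becomes exactly $\prod_{k=1}^{r+1} a_{e'_k}$, and the iterated partial derivative becomes $\D^{r+1} f / (\D x_{e'_1} \cdots \D x_{e'_{r+1}})$. Matching this against the definition of $f^{(r+1)}(\ba,\bx)$ in \eqref{eq:dirder} (with $r$ replaced by $r+1$) gives the claimed identity.

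There is essentially no obstacle here; the only point that deserves a line of justification is that mixed partial derivatives of polynomials commute as formal operators (so the order in which one writes $\D x_j,\D x_{e_1},\ldots,\D x_{e_r}$ inside the iterated derivative does not matter for the reindexing). This holds unconditionally in the formal/polynomial setting over any field, and in particular under the standing assumption of the section that $\char(\F) = 0$ or $\char(\F) > d$. No hypothesis on the characteristic is actually used in this particular observation, but it is compatible with the global assumption. Altogether the proof is a one-paragraph calculation: expand, move the derivative through the sum, and reindex.
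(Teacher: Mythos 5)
Your proof is correct and is exactly the argument the paper intends; the paper states this as an observation that is ``easy to see'' from Definition~\ref{def:dirder} and gives no explicit proof, and your expand--differentiate--reindex calculation is the obvious direct verification, including the (correct) remark that commutativity of formal mixed partials over any field is the only fact needed and that no assumption on the characteristic enters here.
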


This recursive structure implies that the directional derivatives of lower order exhibit a ``domino effect," which
can be captured in the following observation:

\begin{obs}\label{obs:difshifts}
If $f^{(1)}(\ba, \bx) \equiv f^{(1)}(\bb, \bx)$ then $f^{(r)}(\ba, \bx) \equiv f^{(r)}(\bb, \bx)$, for all $r \ge 1$.
\end{obs}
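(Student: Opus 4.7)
The plan is to reformulate the claim as an identity among constant-coefficient differential operators, where commutativity makes the ``domino effect'' immediate. Define for each $\bv \in \F^n$ the first-order directional derivative operator $D_\bv := \sum_{j=1}^n v_j \tfrac{\D}{\D x_j}$, acting on $\F[\bx]$. A straightforward induction on $r$ based on Observation~\ref{obs:derivative} shows that $f^{(r)}(\bv, \bx) = D_\bv^{\,r} f(\bx)$. Moreover, the map $\bv \mapsto D_\bv$ is $\F$-linear, so $D_\ba - D_\bb = D_{\ba-\bb}$; hence the hypothesis $f^{(1)}(\ba,\bx) \equiv f^{(1)}(\bb,\bx)$ is equivalent to the single operator equation $D_{\ba-\bb}\, f \equiv 0$.

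The key observation is that $D_\ba$, $D_\bb$, and $D_{\ba-\bb}$ all have constant (field-valued) coefficients, so they mutually commute as operators on $\F[\bx]$. I would then invoke the standard commutative-ring factorization
$$D_\ba^{\,r} - D_\bb^{\,r} \;=\; D_{\ba-\bb} \cdot \sum_{i=0}^{r-1} D_\ba^{\,i}\, D_\bb^{\,r-1-i},$$
which is valid precisely because $D_\ba$ and $D_\bb$ commute. Applying both sides to $f$ and commuting $D_{\ba-\bb}$ past the remaining operators, I obtain
$$f^{(r)}(\ba,\bx) - f^{(r)}(\bb,\bx) \;=\; \left(\sum_{i=0}^{r-1} D_\ba^{\,i}\, D_\bb^{\,r-1-i}\right)\bigl(D_{\ba-\bb}\, f\bigr) \;=\; 0,$$
since $D_{\ba-\bb}\, f \equiv 0$ by hypothesis.

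There is essentially no obstacle: the whole argument rests on the single fact that partial derivatives with constant coefficients commute. As an equivalent alternative, more in the style of the surrounding observations, one could argue by direct induction on $r$. Assuming $f^{(r)}(\ba,\bx) \equiv f^{(r)}(\bb,\bx)$, Observation~\ref{obs:derivative} gives $f^{(r+1)}(\ba,\bx) - f^{(r+1)}(\bb,\bx) = D_{\ba-\bb}\bigl(f^{(r)}(\bb,\bx)\bigr) = D_\bb^{\,r}\bigl(D_{\ba-\bb}\, f\bigr) = 0$, closing the induction. Either presentation yields the claim in a few lines and does not interact with the characteristic hypothesis beyond what is already in force in Section~\ref{sec:homshifts}.
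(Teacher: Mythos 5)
Your proposal is correct. The reduction to the operator identity $f^{(r)}(\bv,\bx) = D_\bv^{\,r} f(\bx)$ with $D_\bv = \sum_j v_j \frac{\D}{\D x_j}$ is sound (it is exactly Definition~\ref{def:dirder} rewritten), linearity of $\bv \mapsto D_\bv$ gives $D_\ba - D_\bb = D_{\ba - \bb}$, and the telescoping factorization $D_\ba^{\,r} - D_\bb^{\,r} = D_{\ba-\bb}\sum_{i=0}^{r-1} D_\ba^{\,i} D_\bb^{\,r-1-i}$ is valid because constant-coefficient derivations commute; applying it to $f$ and using $D_{\ba-\bb} f \equiv 0$ finishes the claim.

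The paper proves this by induction on $r$, carrying out the commutation of $\sum_j a_j \frac{\D}{\D x_j}$ past $\frac{\D^{r}}{\D x_{e_1}\cdots \D x_{e_r}}$ by explicit rearrangement of the multi-index sums, then substituting $f^{(1)}(\ba,\bx)$ for $f^{(1)}(\bb,\bx)$ and reassembling. Your main argument replaces that bookkeeping with a single ring-theoretic identity, which is cleaner and makes the role of commutativity transparent; it also dispenses with induction entirely. Your alternative inductive argument at the end is essentially the paper's proof repackaged in operator notation: the step $f^{(r+1)}(\ba) - f^{(r+1)}(\bb) = D_{\ba-\bb}\bigl(f^{(r)}(\bb,\bx)\bigr) = D_\bb^{\,r}\bigl(D_{\ba-\bb} f\bigr) = 0$ is precisely what the paper's chain of equalities computes, minus the explicit sums. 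Both presentations are valid; the factorization route buys brevity and conceptual clarity, while the paper's version stays self-contained at the level of the definitions introduced in Section~\ref{sec:homshifts}. You are also right that no characteristic assumption is used here beyond what is already in force for the section.
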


\begin{proof} We will prove this observation by induction on $r$. We know that the claim is true for $r=1$. Now,
given that the claim is true for all values $1 \le t \le r$, we have:
\begin{align*}
f^{(r+1)}(\ba, \bx) &\equiv \sum_{j=1}^n a_j \cdot \frac{\D (f^{(r)}(\ba, \bx))}{\D x_j} &
(\text{by observation~\ref{obs:derivative}}) \\
&\equiv  \sum_{j=1}^n a_j \cdot \frac{\D (f^{(r)}(\bb, \bx))}{\D x_j} & (\text{by induction hypothesis on } r) \\
&\equiv \sum_{j=1}^n a_j \cdot \frac{\D}{\D x_j}\left( \sum_{\be \in [n]^r}
\left( \prod_{k=1}^r b_{e_k} \right) \cdot \frac{\D^{r} f}{\D x_{e_1} \ldots \D x_{e_r}}(\bx)  \right) &
(\text{by definition~\ref{def:dirder}}) \\
&\equiv \sum_{\be \in [n]^r}
\left( \prod_{k=1}^r b_{e_k} \right) \cdot \frac{\D^{r} }{\D x_{e_1} \ldots \D x_{e_r}} \left(
\sum_{j=1}^n a_j \cdot \frac{\D f}{\D x_j}(\bx) \right) &
(\text{by rearranging the sum}) \\
&\equiv \sum_{\be \in [n]^r}
\left( \prod_{k=1}^r b_{e_k} \right) \cdot \frac{\D^{r} }{\D x_{e_1} \ldots \D x_{e_r}} \left( f^{(1)}(\ba,\bx) \right) &
(\text{by definition~\ref{def:dirder}}) \\
&\equiv \sum_{\be \in [n]^r}
\left( \prod_{k=1}^r b_{e_k} \right) \cdot \frac{\D^{r} }{\D x_{e_1} \ldots \D x_{e_r}} \left( f^{(1)}(\bb,\bx) \right) &
(\text{by induction hypothesis}) \\
&\equiv \sum_{\be \in [n]^r}
\left( \prod_{k=1}^r b_{e_k} \right) \cdot \frac{\D^{r} }{\D x_{e_1} \ldots \D x_{e_r}} \left(
\sum_{j=1}^n b_j \cdot \frac{\D f}{\D x_j}(\bx) \right) &
(\text{by definition~\ref{def:dirder}}) \\
&\equiv \sum_{\be \in [n]^{r+1}} \left( \prod_{k=1}^{r+1} b_{e_k} \right) \cdot
\frac{\D^{r+1} f}{\D x_{e_1} \ldots \D x_{e_{r+1}}}(\bx) & (\text{by rearranging the sum}) \\
&\equiv f^{(r+1)}(\bb, \bx) & (\text{by definition~\ref{def:dirder}})
\end{align*}
and this concludes the inductive proof.
\end{proof}

Observation~\ref{obs:difshifts} tells us that if the first order directional derivatives are equal for two different
directions $\ba$ and $\bb$, then all of the higher-order directional derivatives will also be equal. This observation
will be crucial in the design of our algorithm.

Now that we defined directional derivatives, we can state the main lemma of this section, which gives us relations
between the homogeneous components of $f(\bx)$ and $f(\bx+\ba)$:
\newpage
\begin{lem}[Taylor Expansion Lemma]\label{lem:taylor}
Let $\ba \in \F^n$ and let $f \in \F[\bx]$ be such that $deg(f) = d < \char(\F)$
Then, the following relations hold for all $0 \le i \le d$:
\begin{equation}\label{eq:taylor2}
H^{i}(f(\bx + \ba)) \equiv \sum_{j=i}^d \frac{1}{(j-i)!} \cdot  f_j^{(j-i)}(\ba, \bx)
\end{equation}
\end{lem}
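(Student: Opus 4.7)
The plan is to derive the stated identity from the classical multivariate Taylor expansion and then decompose by homogeneous degree. The main work is a bookkeeping step that relates the ``ordered'' directional-derivative definition used in the paper to the standard multi-index derivative, and then identifying the $\bx$-degrees of each summand.

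First I would establish, for any polynomial $p \in \F[\bx]$ of degree at most $d < \char(\F)$, the standard formal Taylor expansion around $\ba$:
\begin{equation*}
p(\bx + \ba) \equiv \sum_{\alpha \in \N^n,\ |\alpha| \le d} \frac{\ba^{\alpha}}{\alpha!}\, \partial^{\alpha} p(\bx),
\end{equation*}
where $\alpha! = \prod_i \alpha_i!$ and $\ba^\alpha = \prod_i a_i^{\alpha_i}$. By linearity it suffices to check this on a monomial $\bx^\delta$, which follows directly from the multinomial theorem applied coordinate-wise to $\prod_i (x_i + a_i)^{\delta_i}$. The characteristic assumption $d < \char(\F)$ ensures that all denominators $\alpha!$ occurring (with $|\alpha|\le d$) are nonzero in $\F$, so the identity is a genuine identity of formal polynomials.

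Next I would translate this multi-index expansion into the ordered directional-derivative form that the paper uses. For a multi-index $\alpha$ with $|\alpha|=r$, the number of ordered tuples $\be \in [n]^r$ whose coordinate-count profile is $\alpha$ equals the multinomial coefficient $r!/\alpha!$, and each such $\be$ yields the same mixed partial $\partial^\alpha$ and the same product $\prod_k a_{e_k} = \ba^\alpha$. Applying this to the homogeneous piece $H^j(f)$ and using Definition~\ref{def:dirder}, I obtain
\begin{equation*}
f_j^{(r)}(\ba,\bx) \equiv \sum_{|\alpha|=r} \frac{r!}{\alpha!}\, \ba^{\alpha}\, \partial^{\alpha} H^j(f)(\bx),
\end{equation*}
so the Taylor expansion of $H^j(f)$ rearranges into
\begin{equation*}
H^j(f)(\bx + \ba) \equiv \sum_{r=0}^{j} \frac{1}{r!}\, f_j^{(r)}(\ba,\bx).
\end{equation*}

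The final step is the degree bookkeeping. Since $H^j(f)$ is homogeneous of degree $j$ in $\bx$ and each partial derivative $\partial/\partial x_k$ reduces the $\bx$-degree by exactly one while $\ba$ contributes nothing to the $\bx$-degree, the polynomial $f_j^{(r)}(\ba,\bx)$ is homogeneous in $\bx$ of degree $j-r$ (when $r \le j$, zero otherwise). Therefore, picking out the degree-$i$ part of the displayed expansion of $H^j(f)(\bx+\ba)$ selects exactly the single summand with $r = j-i$. Summing over $j$ using $f = \sum_{j=0}^{d} H^j(f)$ and the fact that shifting commutes with summation, the degree-$i$ homogeneous component of $f(\bx+\ba)$ is
\begin{equation*}
H^{i}(f(\bx + \ba)) \equiv \sum_{j=i}^{d} \frac{1}{(j-i)!}\, f_j^{(j-i)}(\ba,\bx),
\end{equation*}
which is exactly~\eqref{eq:taylor2}.

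The only conceptual subtlety, and the one I would be most careful about, is the passage from the ordered-tuple definition in Definition~\ref{def:dirder} to the multi-index form via the multinomial coefficients $r!/\alpha!$, and making sure the characteristic hypothesis $d < \char(\F)$ is invoked exactly where a factorial up to $d$ needs to be inverted. Otherwise this is a direct unpacking of the Taylor formula, with no induction needed.
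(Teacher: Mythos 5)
Your proof is correct, and it is essentially the same argument as the paper's, differing only in organization. The paper reduces to a single monomial first, computes $H^i$ of the shifted monomial directly from the binomial theorem, and then recognizes the result as $\frac{1}{(d-i)!}f^{(d-i)}(\ba,\bx)$ via the same multinomial count $\binom{d-i}{d_1-j_1,\ldots,d_n-j_n}$ that you use; you instead state the multi-index Taylor formula for a general polynomial up front, translate once to the ordered-tuple form via the $r!/\alpha!$ count, and then extract the degree-$i$ piece by the degree bookkeeping on $f_j^{(r)}(\ba,\bx)$. Both routes invoke linearity to reduce to monomials and both invoke the characteristic hypothesis exactly to invert factorials bounded by $d$, so the mathematical content is identical.
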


\begin{proof}
	Notice that it is enough to show this lemma for the case where $f$ is a single monomial, since the general
	case follows by additivity of partial derivatives.
	
	If $f(\bx) = \prod_{j=1}^n x_j^{d_j}$, we
	have that
	\[ f(\bx+\ba) = \prod_{k=1}^n (x_k+a_k)^{d_k}  \]
	which implies
	\begin{align*}
	H^i(f(\bx+\ba)) &= \sum_{\substack{j_1+j_2+\ldots+j_n = i \\ j_k \ge 0, \ k \in [n]}}\;\;
	\prod_{k=1}^n \binom{d_k}{j_k} a_k^{d_k-j_k} x_k^{j_k} \\
	&= \sum_{\substack{j_1+j_2+\ldots+j_n = i \\ j_k \ge 0, \ k \in [n]}}
	\left(\prod_{k=1}^n \frac{1}{(d_k-j_k)!} a_k^{d_k-j_k}\right) \cdot
	\left(\prod_{k=1}^n \frac{d_k!}{j_k!} x_k^{j_k}\right) \\
	&= \sum_{\substack{j_1+j_2+\ldots+j_n = i \\ j_k \ge 0, \ k \in [n]}}
	\left(\prod_{k=1}^n \frac{1}{(d_k-j_k)!} a_k^{d_k-j_k}\right) \cdot \frac{\D^{d-i} f}{\prod_{k \in [n]}(\D x_k)^{d_k-j_k}}(\bx) \\
	&=^{(*)} \frac{1}{(d-i)!} \cdot \sum_{\be \in [n]^{d-i}}
	\left(\prod_{k=1}^{d-i} a_{e_k} \right) \cdot \frac{\D^{d-i} f}{\D x_{e_1} \ldots \D x_{e_{d-i}}}(\bx) \\
	&= \frac{1}{(d-i)!} \cdot f^{(d-i)}(\ba,\bx),
	\end{align*}
	where equality $(*)$ follows as each term $\prod_{k=1}^n a_k^{d_k-j_k}$ is counted
	${d-i \choose d_1-j_1,\ldots,d_n-j_n}$ many times when in the sum $\sum_{\be \in [n]^{d-i}}
	\left(\prod_{k=1}^{d-i} a_{e_k} \right)$.
	
	The equations above and additivity of partial derivatives imply that the lemma is true when $f$ is a homogeneous
	polynomial. The proof of the general case is as follows:
	\begin{align*}
	H^i(f(\bx+\ba)) &= H^i\left(\sum_{j=i}^{d} H^j(f)(\bx+\ba)\right) = \sum_{j=i}^{d} H^i\left(H^j(f)(\bx+\ba)\right) \\
	&= \sum_{j=i}^d \frac{1}{(j-i)!} \cdot f_{j}^{(j-i)}(\ba,\bx)
	\end{align*}
	Where the last equality is true because the lemma is true for homogeneous polynomials, and each
	$H^j(f)$ is a homogeneous polynomial of degree $j$.
\end{proof}

Lemma~\ref{lem:taylor} can be seen as the multivariate Taylor expansion of the polynomial $f(\bx)$ around the
point $\ba \in \F^n$.

In order to use Lemma~\ref{lem:taylor}, we need to have access to the polynomials $f_k^{(r)}(\ba, \bx)$ defined in
the lemma. The following observations allow us to have the type of accesses we need.

\begin{obs}\label{obs:bb-hom}
The polynomials $f_k^{(r)}(\ba,\bx)$ are a constant multiple of the homogeneous components of degree 
$k-r$ of $H^k(f)(\bx+\ba)$.
\end{obs}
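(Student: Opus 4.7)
The plan is to derive this directly from the Taylor Expansion Lemma (Lemma~\ref{lem:taylor}) by applying it to the homogeneous polynomial $H^k(f)$ rather than to $f$ itself. The key observation is that when the polynomial being expanded is already homogeneous of a single degree $k$, the sum on the right-hand side of \eqref{eq:taylor2} collapses to a single nonzero term, which is exactly a scalar multiple of the directional derivative we care about.

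More concretely, first I would note that $H^k(f)$ is homogeneous of degree $k$, so its only nonzero homogeneous component is itself in degree $k$, and all its other homogeneous components vanish. Next, I would apply Lemma~\ref{lem:taylor} to $H^k(f)$ to obtain, for every $0 \le i \le k$,
\[
H^{i}\bigl(H^k(f)(\bx + \ba)\bigr) \equiv \sum_{j=i}^{k} \frac{1}{(j-i)!} \cdot \bigl(H^k(f)\bigr)_{j}^{(j-i)}(\ba, \bx).
\]
Using the homogeneity of $H^k(f)$, the inner polynomials $\bigl(H^k(f)\bigr)_{j}^{(j-i)}(\ba,\bx)$ vanish for every $j \neq k$ (since those arise from homogeneous components of $H^k(f)$ that are zero), so only the $j=k$ term survives. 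This gives
\[
H^{i}\bigl(H^k(f)(\bx + \ba)\bigr) \equiv \frac{1}{(k-i)!} \cdot f_k^{(k-i)}(\ba, \bx),
\]
where I used the fact that $\bigl(H^k(f)\bigr)_{k}^{(k-i)}(\ba,\bx) = f_k^{(k-i)}(\ba,\bx)$ by the definition of $f_k^{(\cdot)}$ (which already points to the degree-$k$ homogeneous component of $f$).

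Finally, substituting $i = k - r$, I would conclude
\[
f_k^{(r)}(\ba, \bx) \equiv r! \cdot H^{k-r}\bigl(H^k(f)(\bx + \ba)\bigr),
\]
which is exactly the claim, with explicit constant $r!$. There is no real obstacle here: the whole argument is a one-line reduction to Lemma~\ref{lem:taylor}, and the only thing to be slightly careful about is that $r! \neq 0$ in $\F$, which is guaranteed by our standing assumption that $\char(\F) = 0$ or $\char(\F) > d \geq k \geq r$, so the scalar is nonzero and the statement ``constant multiple'' is meaningful.
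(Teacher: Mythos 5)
Your proof is correct, and it uses the same mechanism the paper intends: the observation is stated without proof, but it is an immediate specialization of Lemma~\ref{lem:taylor} (indeed, the intermediate identity $H^i(f(\bx+\ba)) = \frac{1}{(d-i)!}f^{(d-i)}(\ba,\bx)$ for homogeneous $f$, established inside that lemma's proof, gives the claim directly with $f$ replaced by $H^k(f)$, $d$ by $k$, and $i$ by $k-r$). Your route through the full statement of Lemma~\ref{lem:taylor} plus the vanishing of all but the $j=k$ term is equivalent, and your care about $r!\neq 0$ under the standing characteristic assumption is appropriate.
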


This observation is important because given black-box access to $f$, we can obtain black-box access to the polynomials
$f_k^{(r)}(\ba,\bx)$ by interpolation of the polynomials $H^k(f)(\bx+\ba)$, as we do in Lemma~\ref{lem:interp}.

Notice that if we are only concerned with a bound on the size of a circuit computing the homogeneous components of a
polynomial $f$, then by a result of Strassen in \cite{Strassen73a} we have the following theorem:

\begin{thm}\label{thm:strassen}
If $f$ can be computed by an arithmetic circuit of size $s$, then for every $k \in \N$, there is a homogeneous circuit
of size at most $O(r^2 s)$ computing all of the polynomials $H^k(f)$, where $0 \le k \le r$. Moreover, given access to
the circuit computing $f$, we can construct the homogeneous circuit computing the homogeneous components of $f$.
\end{thm}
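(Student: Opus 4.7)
The plan is to perform the standard homogenization construction due to Strassen. I will build a new circuit $C'$ that, for every gate $v$ of the original circuit $C$, contains $r+1$ designated gates $v^{(0)}, v^{(1)}, \ldots, v^{(r)}$, with the invariant that $v^{(k)}$ computes $H^k(p_v)$, where $p_v$ is the polynomial computed at $v$. The output gates $v^{(0)}, \ldots, v^{(r)}$ corresponding to the original output gate of $C$ will then compute exactly $H^0(f), \ldots, H^r(f)$.

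The construction is done by induction on the gates of $C$ in topological order. For an input gate labeled by a variable $x_j$, set $v^{(1)} = x_j$ and $v^{(k)} = 0$ for $k \neq 1$; for an input gate labeled by a field constant $c$, set $v^{(0)} = c$ and $v^{(k)} = 0$ for $k \geq 1$. For a sum gate $v = u + w$, use the identity $H^k(p_u + p_w) = H^k(p_u) + H^k(p_w)$ and define $v^{(k)} := u^{(k)} + w^{(k)}$. For a product gate $v = u \cdot w$, use the convolution identity
\begin{equation*}
H^k(p_u \cdot p_w) \;=\; \sum_{j=0}^{k} H^j(p_u) \cdot H^{k-j}(p_w),
\end{equation*}
so that $v^{(k)}$ can be built from $u^{(0)}, \ldots, u^{(k)}$ and $w^{(0)}, \ldots, w^{(k)}$ using $k+1 \leq r+1$ product gates and $k \leq r$ addition gates. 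Correctness of the invariant follows by a routine induction on the topological depth, using the fact that homogeneous decomposition is additive and that multiplying homogeneous polynomials of degrees $j$ and $k-j$ yields a homogeneous polynomial of degree $k$.

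For the size bound: each input gate of $C$ contributes $O(r)$ gates to $C'$; each sum gate of $C$ contributes $r+1$ sum gates; each product gate of $C$ contributes $O(r)$ sum gates and $O(r^2)$ product gates (since each of the $r+1$ convolutions uses at most $r+1$ products). Summing over all $s$ gates of $C$, the total size of $C'$ is $O(r^2 s)$, and $C'$ is homogeneous by construction (every gate computes a homogeneous polynomial of a well-defined degree). Finally, the procedure is entirely explicit and runs in time polynomial in $s$ and $r$, so given $C$ we can efficiently output $C'$, proving the "Moreover" part.

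There is no genuine obstacle; the only mild subtlety is bookkeeping the size to get the quadratic $r^2$ factor rather than a cubic one, which just amounts to noting that each original product gate is responsible for $\Theta(r^2)$ new gates and each original sum gate for only $\Theta(r)$ new gates.
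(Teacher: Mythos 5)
Your proposal is correct, and it is exactly the standard homogenization argument of Strassen that the paper invokes by citation without reproving it (the paper simply attributes Theorem~\ref{thm:strassen} to \cite{Strassen73a}). Your gate-by-gate construction, the convolution identity for product gates, and the accounting that attributes $O(r)$ new gates to each addition gate and $O(r^2)$ to each product gate are all right, and you correctly observe the resulting circuit is homogeneous since every intermediate gate (including partial sums inside a convolution) computes a polynomial homogeneous of a fixed degree. One small remark on the statement itself: the quantifier should read ``for every $r \in \N$'' rather than ``for every $k \in \N$,'' since $r$ is the parameter governing both the degree range and the size bound; your proof implicitly reads it that way, which is the intended meaning.
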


A straightforward consequence of this theorem and of observation~\ref{obs:bb-hom} is stated below:

\begin{cor}\label{cor:circuit-hom}
If $f$ has degree $d$ and can be computed by an arithmetic circuit of size $s$, then
for every shift $\ba \in \F^n$, there is a homogeneous circuit of size at most $O(d^4 s)$ computing all of the
polynomials $f^{(r)}_k(\ba, \bx)$, where $0 \le k, r \le d$. Moreover, given access to
the circuit computing $f$, we can construct the homogeneous circuit computing all the polynomials $f^{(r)}_k(\ba, \bx)$.
\end{cor}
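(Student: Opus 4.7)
The plan is to combine Strassen's homogenization theorem (Theorem~\ref{thm:strassen}) with a shift and a second application of Strassen's theorem, then invoke Observation~\ref{obs:bb-hom} to identify the resulting homogeneous components with the polynomials $f^{(r)}_k(\ba,\bx)$ (up to the normalizing constants coming from Lemma~\ref{lem:taylor}).

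First, starting from the given circuit of size $s$ for $f$, apply Theorem~\ref{thm:strassen} with parameter $r=d$ to obtain a single homogeneous circuit $C_1$ of size $O(d^2 s)$ whose output gates compute all of the homogeneous components $H^0(f)(\bx), H^1(f)(\bx), \ldots, H^d(f)(\bx)$. Next, perform the shift $\bx \mapsto \bx + \ba$: replace every input gate labeled by a variable $x_i$ with a ``$+$'' gate adding the constant $a_i$ to $x_i$. This increases the circuit size only by an additive $O(n)$ (and does not change the asymptotics, as $n \le s$ w.l.o.g.), yielding a circuit $C_2$ of size $O(d^2 s)$ whose output gates compute $H^0(f)(\bx+\ba), \ldots, H^d(f)(\bx+\ba)$.

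Now apply Strassen's homogenization procedure a second time, this time to $C_2$. Strassen's procedure is a local transformation: each gate of the input circuit is replaced by $d+1$ gates computing the homogeneous components (in the grading by total degree in $\bx$) of the polynomial at that gate, so the resulting circuit has size $O(d^2)$ times the size of $C_2$, i.e.~$O(d^4 s)$. The output gates of this new circuit $C_3$ thus compute all polynomials of the form $H^{i}\bigl(H^k(f)(\bx+\ba)\bigr)$ for $0 \le i \le k \le d$, all simultaneously. By Observation~\ref{obs:bb-hom}, each such polynomial equals a constant (a factorial coming from Lemma~\ref{lem:taylor}) times $f^{(k-i)}_k(\ba,\bx)$, so multiplying by the appropriate scalars at each output gives a homogeneous circuit of size $O(d^4 s)$ computing all of the $f^{(r)}_k(\ba,\bx)$ for $0 \le r \le k \le d$ (and the remaining polynomials with $r > k$ are identically zero and need no computation). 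The entire construction is explicit given the original circuit, which establishes the ``moreover'' clause.

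No real obstacle arises here; the only things to track are (i) that the shift $\bx \mapsto \bx+\ba$ is size-preserving up to constants, and (ii) that Strassen's homogenization is a local gate-by-gate transformation, so it can be applied to a multi-output circuit without paying an extra factor for the number of outputs. Once those two bookkeeping points are in place, the bound $O(d^4 s)$ follows by multiplying the two $O(d^2)$ overheads.
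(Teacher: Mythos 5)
Your proof is correct and matches the route the paper intends (the paper leaves this corollary as a "straightforward consequence" of Theorem~\ref{thm:strassen} and Observation~\ref{obs:bb-hom}, and your Strassen $\to$ shift $\to$ Strassen $\to$ rescale chain is exactly the natural way to realize it, with the two $O(d^2)$ overheads giving $O(d^4 s)$). The one point worth flagging — that Strassen's homogenization is gate-local and hence applies to a multi-output circuit at no extra cost — you handle explicitly, so no gaps remain.
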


\section{Kernel of Shifts of a Polynomial}\label{sec:kershifts}

As observed by Grigoriev in \cite{grig-q2tcs}, the set of points $\ba \in \F^n$ such that $f(\bx+\ba) \equiv f(\bx)$, which
we call the \emph{kernel} of $f$, forms a linear subspace of $\F^n$.
In this section, we describe some properties of the kernel and introduce
some lemmas which describe the relationship between points $\ba$
in the kernel and the directional derivatives of $f$ on the direction $\ba$.

We begin with the following definitions:

\begin{define}\label{def:ker}
Let $f(\bx) \in \F[\bx]$ be a polynomial of degree $d$. We define the \emph{kernel} of $f$ as the set
\[ \cS_f = \{ \ba \in \F^n \ | \ f(\bx + \ba) \equiv f(\bx) \}, \]
that is, $\cS_f$ is the set of all points $\ba \in \F^n$ such that if we shift the input of $f(\bx)$ by $\ba$
we obtain the same formal polynomial. Here $\bx$ is regarded as a formal set of variables and $\ba$
is a point in $\F^n$.
\end{define}

We can observe the following properties of the kernel $\cS_f$:

\begin{obs}
Let $f \in \F[\bx]$ be a polynomial of degree $d$, where $d < \char(\F)$ if $\char(\F)\neq 0$.
Then the kernel $\cS_f$ is a subspace of $\F^n$.
\end{obs}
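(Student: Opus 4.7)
The plan is to verify directly the three subspace axioms: $\mathbf{0}\in\cS_f$, closure under addition, and closure under scalar multiplication. The first is immediate from $f(\bx+\mathbf{0})\equiv f(\bx)$. For closure under addition, given $\ba,\bb\in\cS_f$, the defining identity $f(\by+\bb)\equiv f(\by)$ holds as a formal polynomial identity in $\F[\by]$, so substituting $\by\mapsto \bx+\ba$ yields $f(\bx+\ba+\bb)\equiv f(\bx+\ba)$, which in turn equals $f(\bx)$ by the hypothesis $\ba\in\cS_f$; hence $\ba+\bb\in\cS_f$.

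The only nontrivial axiom is closure under scalar multiplication, and here the characteristic hypothesis $d<\char(\F)$ enters essentially. Given $\ba\in\cS_f$ and $\lambda\in\F$, I would consider
\[
F(t,\bx) \;\triangleq\; f(\bx+t\,\ba)-f(\bx) \;\in\; \F[\bx][t],
\]
regarded as a polynomial in the single variable $t$ whose coefficients are polynomials in $\bx$. Since each $x_i$ is replaced by $x_i+t a_i$ inside $f$, the degree of $F$ in $t$ is at most $d$. Iterating additive closure (which I have just established) shows that $k\ba\in\cS_f$ for every nonnegative integer $k$, so $F(k,\bx)\equiv 0$ for $k=0,1,\dots,d$. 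Under the assumption $d<\char(\F)$ (or $\char(\F)=0$), these $d+1$ integers are distinct elements of $\F$; writing $F=\sum_{r=0}^d t^r P_r(\bx)$, the resulting linear system on the $P_r$'s has an invertible Vandermonde coefficient matrix, forcing $P_r\equiv 0$ for every $r$. Therefore $F\equiv 0$ as a polynomial in $t$ and $\bx$, and in particular $F(\lambda,\bx)\equiv 0$, giving $\lambda\ba\in\cS_f$ for every $\lambda\in\F$.

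The main (and only real) obstacle is this last step: the argument requires enough distinct integer values of $t$ at which $F$ is known to vanish, which is exactly what the characteristic hypothesis supplies. Without $d<\char(\F)$ the Vandermonde argument breaks down, and one should not expect $\cS_f$ to be a subspace in the scalar-multiplication sense, though it would still be an additive subgroup of $\F^n$ by the first two steps.
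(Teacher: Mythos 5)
Your proof is correct and follows essentially the same approach as the paper: the only nontrivial point is closure under scalar multiplication, handled by noting that $k\ba\in\cS_f$ for integers $0\le k\le d$ (which are distinct under the characteristic hypothesis) and concluding that $f(\bx+t\ba)-f(\bx)$, a polynomial of degree at most $d$ in $t$, vanishes identically. The paper phrases this last step as "a degree-$\le d$ polynomial with $d+1$ roots must be zero" rather than via an explicit Vandermonde system, but that is the same argument.
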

\begin{proof} We need to check three conditions:
\begin{enumerate}[(i)]
	\item $0 \in \cS_f$
	\item $\ba, \bb \in \cS_f \then \ba+\bb \in \cS_f$
	\item $\ba \in \cS_f \then t \ba \in \cS_f$, for all $t \in \F$
\end{enumerate}
Conditions (i) and (ii) are trivial to check. Hence, we only need to show that condition (iii) holds. By repeatedly applying (ii),
we have $\ba \in \cS_f \then k \ba \in \cS_f$, for any $k \in \N$. In particular, since the degree of $f$ is less then the
characteristic of $\F$, we have that $k \ba \in \cS_f$ for $0 \le k \le d$, which are all distinct values.
Hence, the polynomial $f(\bx + t\ba) - f(\bx) \in \F(\bx)[t]$ has degree $\le d$ in $t$ and has at least $d+1$ distinct roots. This
implies that $f(\bx + t\ba) - f(\bx)$ must vanish as a polynomial in $t$, which implies that $f(\bx + t\ba) - f(\bx) \equiv 0$
for all $t \in \F$. This proves condition (iii) and therefore $\cS_f$ is a subspace of $\F^n$.
\end{proof}

More generally, we can define the set of shift-equivalences between two polynomials $f, g \in \F[\bx]$:
\begin{define}
Given two polynomials $f, g \in \F[\bx]$, we define the set of shift-equivalences of $f$ and $g$ as
\[ \cS_{f,g} = \{ \ba \in \F^n \ | \ f(\bx + \ba ) \equiv g(\bx) \}, \]
that is, $\cS_{f,g}$ is the set of all points $\ba \in \F^n$ such that if we shift the input of
$f(\bx)$ by $\ba$ we obtain the formal polynomial $g(\bx)$.
\end{define}

Note that the set $\cS_{f,g}$ is intrinsically related to $\cS_f$, since if we have any two elements $\ba$ and $\bb$
of $\cS_{f,g}$, we must have that $\ba-\bb \in \cS_f$. In other words, $\cS_{f,g}$ is a coset of $\cS_f$. Furthermore,
it must hold that $\cS_f=\cS_g$.

\begin{lem}\label{lem:Sp=Sq}
Let $f, g \in \F[\bx]$ such that there exists $\ba\in \F^n$ for which $f(\bx+\ba)=g(\bx)$. Then $\cS_f=\cS_g$.
Furthermore, for any such $\ba$ we have $\cS_{f,g} =\cS_f+\ba$.

\end{lem}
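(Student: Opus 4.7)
The plan is to prove both claims by direct substitution arguments, using the defining equation $f(\bx + \ba) \equiv g(\bx)$ repeatedly.

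First I would show that $\cS_f \subseteq \cS_g$ (the reverse inclusion is symmetric since from $f(\bx+\ba) \equiv g(\bx)$ we get $g(\bx - \ba) \equiv f(\bx)$). Let $\bc \in \cS_f$, so $f(\by + \bc) \equiv f(\by)$ as formal polynomials in $\by$. Then
\[
g(\bx + \bc) \equiv f(\bx + \bc + \ba) \equiv f((\bx + \ba) + \bc).
\]
Substituting $\by = \bx + \ba$ into the identity $f(\by+\bc) \equiv f(\by)$ (which is legitimate because it holds as a polynomial identity in $\by$), this last expression equals $f(\bx+\ba) \equiv g(\bx)$. Hence $\bc \in \cS_g$.

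For the second claim I would prove the two inclusions of $\cS_{f,g} = \cS_f + \ba$ separately. For $\supseteq$, take $\bb = \bc + \ba$ with $\bc \in \cS_f$. Then $f(\bx + \bb) \equiv f((\bx + \bc) + \ba) \equiv g(\bx + \bc)$, and by the first part $\bc \in \cS_g$, so $g(\bx + \bc) \equiv g(\bx)$, giving $\bb \in \cS_{f,g}$. For $\subseteq$, suppose $\bb \in \cS_{f,g}$, so $f(\bx + \bb) \equiv g(\bx) \equiv f(\bx + \ba)$. Substituting $\bx \mapsto \bx - \ba$ in the identity $f(\bx+\bb) \equiv f(\bx+\ba)$ yields $f(\bx + (\bb - \ba)) \equiv f(\bx)$, so $\bb - \ba \in \cS_f$ and $\bb \in \cS_f + \ba$.

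There is no real obstacle here; the lemma is essentially a formal manipulation, and the only subtlety is making sure that each substitution $\bx \mapsto \bx + \bc$ or $\bx \mapsto \bx - \ba$ is applied to a genuine polynomial identity (so that it remains valid after the substitution), which it is in each case above.
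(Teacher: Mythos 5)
Your proof is correct and follows essentially the same direct substitution argument as the paper; the only difference is cosmetic (you show $\cS_f \subseteq \cS_g$ explicitly and note the reverse by symmetry, while the paper shows $\cS_g \subseteq \cS_f$ and notes the other direction, and you spell out both inclusions of $\cS_{f,g} = \cS_f + \ba$ where the paper leaves $\supseteq$ to the reader). No gaps.
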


\begin{proof}
Let $\bb\in \cS_g$. Then $f(\bx+\bb) = f((\bx-\ba+\bb)+\ba) = g(\bx-\ba+\bb)=g(\bx-\ba)=f(\bx)$. Hence,
$\cS_g\subseteq \cS_f$. The other direction is similar.

Given $\ba$ and $\bb$ in $\cS_{f,g}$ we have $f(\bx-\ba+\bb) = g(\bx-\ba) = f(\bx)$. Hence $\bb-\ba \in \cS_{f}$. Thus,
$\cS_{f,g} \subseteq \cS_f+\ba$. It is also straightforward to verify that $\cS_f+\ba \subseteq \cS_{f,g}$.
\end{proof}

Another interesting property which relates the kernel to directional derivatives is captured by the following lemma,
which states that a shift $\ba \in {\F}^n$ is in the kernel of shifts $\cS_f$ if, and only if, the first order directional
derivative of $f$ in the direction $\ba$ is zero.

\begin{lem}\label{lem:kernelshift} Let $f \in \F[\bx]$ be a polynomial of degree $d$, where $d < \char(\F)$ if $\char(\F)\neq 0$.
Then, $\ba \in \cS_f$ if, and only if, $f^{(1)}(\ba, \bx) \equiv 0$.
\end{lem}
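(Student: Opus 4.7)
The plan is to prove both implications by relating $f^{(1)}(\ba,\bx)$ to the full Taylor expansion of $f(\bx+\ba)$. A useful preliminary is to sum the identity of Lemma~\ref{lem:taylor} over all $i\in\{0,\ldots,d\}$ and reindex by $r=j-i$, noting that $f_j^{(r)}(\ba,\bx)\equiv 0$ whenever $r>j$ (since $H^j(f)$ has degree $j$). This yields the compact Taylor form
\[
f(\bx+\ba) \equiv \sum_{r=0}^{d} \frac{1}{r!}\, f^{(r)}(\ba,\bx).
\]
Moreover, because $f^{(r)}(\cdot,\bx)$ is $r$-homogeneous in its first argument (each summand is a product of $r$ coordinates of the direction), substituting $t\ba$ for $\ba$ gives $f^{(r)}(t\ba,\bx)\equiv t^r f^{(r)}(\ba,\bx)$, hence
\[
f(\bx+t\ba) \equiv \sum_{r=0}^{d} \frac{t^r}{r!}\, f^{(r)}(\ba,\bx).
\]

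For the forward direction, suppose $\ba\in\cS_f$. Since $\cS_f$ is a subspace of $\F^n$, every scalar multiple $t\ba$ also lies in $\cS_f$, so $f(\bx+t\ba)\equiv f(\bx)$ for all $t\in\F$. Viewing the displayed identity as a polynomial in the formal variable $t$ with coefficients in $\F[\bx]$, the polynomial $\sum_{r=1}^{d} \tfrac{t^r}{r!} f^{(r)}(\ba,\bx)$ has degree at most $d$ in $t$ and vanishes on all of $\F$; because $\char(\F)=0$ or $\char(\F)>d$, the field has more than $d$ elements, so every coefficient of $t^r$ must be identically zero in $\bx$. In particular the coefficient of $t^1$ gives $f^{(1)}(\ba,\bx)\equiv 0$.

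For the converse direction, assume $f^{(1)}(\ba,\bx)\equiv 0$. Since $f^{(1)}(0,\bx)\equiv 0$ trivially, Observation~\ref{obs:difshifts} (with $\bb=0$) yields $f^{(r)}(\ba,\bx)\equiv f^{(r)}(0,\bx)\equiv 0$ for every $r\ge 1$. Plugging this into the compact Taylor identity collapses the sum to the $r=0$ term, giving $f(\bx+\ba)\equiv f^{(0)}(\ba,\bx)=f(\bx)$, so $\ba\in\cS_f$.

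There is no deep obstacle; the main step is establishing the compact Taylor identity from Lemma~\ref{lem:taylor}, and the only subtle point is the appeal to $\char(\F)>d$ in the forward direction, which is needed both to extract coefficients from a polynomial identity in $t$ and to invoke the subspace structure of $\cS_f$.
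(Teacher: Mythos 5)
Your proof is correct, but it takes a genuinely different route from the paper's. The paper performs a change of basis mapping $\ba$ to $\be_1$, which reduces the claim to showing $\be_1\in\cS_f$ iff $\partial f/\partial y_1\equiv 0$; it then argues both directions directly from the structure of $f$ as a polynomial in $y_1$ (the ``only if'' direction by exhibiting a point $\bc$ and a scalar $t$ witnessing $f(\bc+t\be_1)\neq f(\bc)$ when the $y_1$-degree is positive). Your approach instead re-packages Lemma~\ref{lem:taylor} into the compact identity $f(\bx+\ba)\equiv\sum_{r=0}^d \frac{1}{r!}f^{(r)}(\ba,\bx)$, exploits the $r$-homogeneity of $f^{(r)}(\cdot,\bx)$ in the direction argument, and then reads off the conclusion: for the forward direction by scaling $\ba\mapsto t\ba$ via the subspace structure of $\cS_f$ and extracting the $t^1$-coefficient, and for the converse by combining Observation~\ref{obs:difshifts} with the same Taylor identity. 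Both are valid; yours reuses machinery already developed in Section~\ref{sec:homshifts} and avoids the change of variables, while the paper's version is slightly more self-contained (it does not invoke the subspace observation, whose own proof already hinges on a ``polynomial in $t$ with $>d$ roots'' argument very similar to your forward direction). One minor stylistic point: your forward direction and the paper's earlier proof that $\cS_f$ is a subspace essentially run the same $t$-degree argument twice; you could streamline by noting that $f(\bx+t\ba)-f(\bx)\equiv 0$ as a polynomial in $t$ is exactly what that observation establishes, so you only need to differentiate at $t=0$ (i.e., read off the linear coefficient) rather than re-derive the vanishing.
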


\begin{proof}
After a suitable change of basis that maps $\ba$ to $\be_1$, we need to prove that
$\be_1 \in \cS_f$ if, and only if, $f^{(1)}(\be_1, \by) \equiv 0$, where $\by$ is the image of $\bx$ under this change
of basis. Since $f^{(1)}(\be_1, \by) = \dfrac{\D f}{\D y_1}(\by)$, we must show that $\be_1 \in \cS_f$ if, and only if, 
$\dfrac{\D f}{\D y_1}(\by) \equiv 0$. 

To see the first direction, note that if $\dfrac{\D f}{\D y_1}(\by) \equiv 0$ and
$d < \char(\F)$ then $f \in \F[y_2, \ldots, y_n]$ which implies $f(\by+\be_1) \equiv f(\by)$. Hence, $\be_1 \in \cS_f$.

On the other hand, let us write $f(\by) = \ds\sum_{i=0}^d y_1^i \cdot f_i(y_2, \ldots, y_n)$.
Let $k$ be the highest index for
which $f_k(y_2,\ldots, y_n) \neq 0$. If $\dfrac{\D f}{\D y_1}(\by) \not\equiv 0$, then $k>0$.
Let $\bc \in \F^n$ be such that $f_k(c_2, \ldots, c_n) \neq 0$ and $c_1 = 0$.
Then, if we define $b_i = f_i(c_2, \ldots, c_n)$, for $0 \le i \le k$, we have
$f(\bc) = b_0$ and $f(\bc+t\be_1) = b_k t^k + \ds\sum_{i=1}^{k-1} b_i t^i $, where $b_k \neq 0$. This implies
$f(\bc+t\be_1) - f(\bc) = b_k t^k + \ds\sum_{i=1}^{k-1} b_i t^i$. Hence, there exists $t \in \F$ such that
$b_k t^k + \ds\sum_{i=1}^{k-1} b_i t^i \neq 0 \then f(\bc+t\be_1) - f(\bc) \neq 0$. Thus,
$f(\by+t\be_1) - f(\by) \not\equiv 0$ and so $\be_1 \not\in \cS_f$.
\end{proof}
An easy corollary of Lemma~\ref{lem:kernelshift} and of observation~\ref{obs:difshifts} is the following:

\begin{cor}\label{cor:hokernelshift} If $\ba \in \cS_f$ then $f^{(r)}(\ba, \bx) \equiv 0$, for all $r \ge 1$.
\end{cor}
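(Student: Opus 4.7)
The proof will be a short combination of the two results just established. The plan is to use Lemma~\ref{lem:kernelshift} to turn the hypothesis $\ba \in \cS_f$ into the statement $f^{(1)}(\ba,\bx) \equiv 0$, and then use Observation~\ref{obs:difshifts} to propagate this vanishing to all higher orders $r \ge 1$.

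The key trick is to compare the direction $\ba$ against the zero direction $\bb = \mathbf{0}$. From Definition~\ref{def:dirder}, every summand in $f^{(r)}(\mathbf{0},\bx)$ contains a factor $a_{e_k}$ taken from $\mathbf{0}$, so $f^{(r)}(\mathbf{0},\bx) \equiv 0$ for every $r \ge 1$ (in particular for $r = 1$). Thus, by Lemma~\ref{lem:kernelshift}, the assumption $\ba \in \cS_f$ gives
\[
f^{(1)}(\ba,\bx) \equiv 0 \equiv f^{(1)}(\mathbf{0},\bx).
\]
Now Observation~\ref{obs:difshifts}, applied to $\ba$ and $\bb = \mathbf{0}$, yields $f^{(r)}(\ba,\bx) \equiv f^{(r)}(\mathbf{0},\bx) \equiv 0$ for every $r \ge 1$, which is exactly what we want.

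There is really no main obstacle here: both ingredients have already been proved, and the only thing to notice is the trivial identity $f^{(r)}(\mathbf{0},\bx) \equiv 0$. If one preferred to avoid invoking the zero direction, an alternative plan is a direct induction on $r$ using the recursion of Observation~\ref{obs:derivative}: the base case $r=1$ is Lemma~\ref{lem:kernelshift}, and the inductive step follows because $f^{(r+1)}(\ba,\bx) = \sum_j a_j \,\partial_{x_j} f^{(r)}(\ba,\bx)$, which vanishes once $f^{(r)}(\ba,\bx) \equiv 0$. Either presentation produces a proof of at most a few lines.
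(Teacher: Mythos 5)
Your proof is correct and follows exactly the route the paper intends: the corollary is stated as a consequence of Lemma~\ref{lem:kernelshift} and Observation~\ref{obs:difshifts}, and applying the observation with $\bb=\mathbf{0}$ (after noting $f^{(r)}(\mathbf{0},\bx)\equiv 0$) is the natural way to instantiate it. Your alternative direct induction via Observation~\ref{obs:derivative} is equally valid and, if anything, slightly more elementary since it bypasses Observation~\ref{obs:difshifts} entirely.
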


\ignore{
Notice that, unlike the kernel $\cS_f$, if $f$ and $g$ are distinct polynomials, then $\cS_{f,g}$ will not
necessarily be a subspace of $\F^n$. However, $\cS_{f,g}$ has a nice property which is captured by the following
lemma, which is trivial to prove.

\begin{lem}\label{lem:shiftlemma}
If $\ba \in \cS_{f,g}$ and $\bb \in \F^n$ then $\bb \in \cS_{f,g}$ if, and only if, $\ba-\bb \in \cS_{f} \cap \cS_{q}$.
\end{lem}
}

Another property that easily follows from linearity of $f^{(1)}(\ba, \bx)$ (in $\ba$) and from Lemma~\ref{lem:Sp=Sq}
is captured by the following lemma:
\begin{lem}\label{lem:mainshiftlemma}
If $\ba \in \cS_{f,g}$ and $\bb \in \F^n$ then $\bb \in \cS_{f,g}$ if, and only if,
$f^{(1)}(\ba, \bx) \equiv f^{(1)}(\bb, \bx)$. Thus, $\bb\in\cS_{f,g}$ if, and only if,
$f_i^{(1)}(\bb, \bx) \equiv f_i^{(1)}(\ba, \bx)$ for all $0 \le i \le d$.
\end{lem}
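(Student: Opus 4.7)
The plan is to reduce both equivalences to statements already established earlier in the paper, namely Lemma~\ref{lem:Sp=Sq} (which identifies $\cS_{f,g}$ with the coset $\cS_f + \ba$) and Lemma~\ref{lem:kernelshift} (which characterizes membership in the kernel $\cS_f$ via the vanishing of the first-order directional derivative). The key algebraic fact that glues these together is that the map $\ba \mapsto f^{(1)}(\ba, \bx)$ is \emph{linear in $\ba$}: this is immediate from the defining formula $f^{(1)}(\ba,\bx) = \sum_{t=1}^n a_t \cdot \frac{\partial f}{\partial x_t}(\bx)$ in Definition~\ref{def:dirder}.

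For the first equivalence, I would argue as follows. By Lemma~\ref{lem:Sp=Sq}, $\bb \in \cS_{f,g}$ if and only if $\bb - \ba \in \cS_f$. By Lemma~\ref{lem:kernelshift}, this holds if and only if $f^{(1)}(\bb-\ba, \bx) \equiv 0$. By linearity of $f^{(1)}(\cdot, \bx)$ in its first argument, $f^{(1)}(\bb-\ba, \bx) \equiv f^{(1)}(\bb, \bx) - f^{(1)}(\ba, \bx)$, so the condition is equivalent to $f^{(1)}(\bb, \bx) \equiv f^{(1)}(\ba, \bx)$, which is exactly what we wanted.

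For the second equivalence, I would use that $f^{(1)}(\ba, \bx) = \sum_{i=0}^d f_i^{(1)}(\ba, \bx)$, and crucially that $f_i^{(1)}(\ba, \bx)$ is homogeneous of degree $i-1$ in $\bx$ (since $H^i(f)$ is homogeneous of degree $i$ and differentiation lowers the degree by one). Thus the polynomials $f_i^{(1)}(\ba, \bx)$ sit in distinct homogeneous components of $f^{(1)}(\ba, \bx)$, and equality of the two full polynomials $f^{(1)}(\ba, \bx)$ and $f^{(1)}(\bb, \bx)$ is equivalent to equality homogeneous-component-by-homogeneous-component, i.e.\ $f_i^{(1)}(\ba, \bx) \equiv f_i^{(1)}(\bb, \bx)$ for every $0 \le i \le d$.

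There is no real obstacle here: the lemma is essentially a bookkeeping consequence of previously established facts. The only small care required is to note the linearity of $f^{(1)}(\cdot, \bx)$ in its first argument (a direct consequence of the defining formula) and the fact that the $f_i^{(1)}(\ba, \bx)$ are precisely the distinct homogeneous components of $f^{(1)}(\ba, \bx)$, so that polynomial equality decomposes degree-by-degree.
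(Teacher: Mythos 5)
Your proof is correct and follows essentially the same route as the paper: both chain together Lemma~\ref{lem:Sp=Sq}, Lemma~\ref{lem:kernelshift}, and the linearity of $f^{(1)}(\cdot,\bx)$ in its first argument. The only difference is that you spell out the second equivalence (which the paper calls ``immediate'') by noting that the $f_i^{(1)}(\ba,\bx)$ are the distinct homogeneous components of $f^{(1)}(\ba,\bx)$, which is a correct and helpful elaboration.
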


\begin{proof}
$f^{(1)}(\ba, \bx) \equiv f^{(1)}(\bb, \bx)$ if, and only if, $f^{(1)}(\bb-\ba, \bx) \equiv 0$. By Lemma~\ref{lem:kernelshift}
this is equivalent to $\bb-\ba \in \cS_f$ and hence to $\bb \in \cS_f + \ba$. This is equivalent, by Lemma~\ref{lem:Sp=Sq},
to having $\bb \in \cS_{f,g}$ as desired.

The second part of the lemma is immediate.
\end{proof}


\section{Proof of Equivalence Under Shifts}\label{sec:main}

In this section we give intuition and an overview of our algorithm in subsection~\ref{subsec:overview}, followed
by a formal description of the algorithm and its analysis in subsection~\ref{subsec:mainproof}.
\newpage
\subsection{Overview of the Algorithm}\label{subsec:overview}

In this section, we will describe an overview of the steps in our algorithm.
The high level idea of the algorithm was given in section~\ref{sec:proof overview}.
For the sake of clarity, we will leave the explanations of the preprocessing stage for the analysis of the algorithm and
we will assume that the input given is already preprocessed accordingly.

In the highest level, our algorithm will produce a candidate shift $\ba$ such that $f(\bx+\ba) \equiv g(\bx)$
and then use PIT on the polynomial $f(\bx) - g(\bx-\ba)$, to check that the solution $\ba$ is indeed a good shift.
We need to perform the PIT on the polynomial $f(\bx) - g(\bx-\ba)$ because $\cM_2$ is closed under shifts.
We proceed in this way because
this approach allows us to assume from the beginning on that $\cS_{f,g} \neq \emptyset$. For this section, we
can assume that $\cS_{f,g} \neq \emptyset$, that $\bc \in \cS_{f,g}$ and that $d_f = d_g = d$.

By our Taylor Expansion Lemma (Lemma~\ref{lem:taylor}), to find a good shift $\bc \in \cS_{f,g}$
we need to solve the system
of polynomial equations (in the variables $\ba$) given by the set of equations~\eqref{eq:taylor2} in the Lemma.
We cannot hope to solve these equations directly, since that would involve solving non-linear systems of equations.
However, Lemma~\ref{lem:mainshiftlemma} tells us that in order to find a good
shift, we only need to obtain black-box access to the polynomials $f_k^{(1)}(\bc,\bx)$, where $\bc \in \cS_{f,g}$.
If we succeed in obtaining black-box access to these polynomials, finding a good shift will only involve solving a
linear system of polynomial equations in the black-box setting, which we can do by any of the lemmas: 
Lemma~\ref{lem:poleq-wb}, Lemma~\ref{lem:poleq-bb} or Lemma~\ref{lem:poleq-rand}, depending on which case we 
are in. Hence, our approach to solve the original set of equations is to obtain black-box
access to the polynomials $f_k^{(1)}(\bc,\bx)$.

Note that we cannot obtain direct access to $f_k^{(1)}(\bc,\bx)$ (through interpolation) from neither $f$ nor $g$, for
a general $k$. However, from $f$ and $g$ we have black-box access to $f_d^{(1)}(\bc,\bx)$, since
$f_d^{(1)}(\bc,\bx) = H^{d-1}(g(\bx)) - H^{d-1}(f(\bx))$. It turns out that this initial information is enough for the algorithm
to find a good shift. To find the shift we will iteratively find candidate solutions $\ba_r$, such that
$f_k^{(1)}(\bc,\bx) \equiv f_k^{(1)}(\ba_r,\bx)$ for all $d-r \le k \le d$. Then, by the domino effect from
Observation~\ref{obs:difshifts} we have that $f_k^{(t)}(\bc,\bx) \equiv f_k^{(t)}(\ba_r,\bx)$, for all $t \ge 1$. Hence,
once we find $\ba_k$ the domino effect and Lemma~\ref{lem:taylor} imply that we can find
$\ba_{r+1}$ simply by solving linear equations. In the end, if the algorithm does not fail, we will obtain $\ba_d$ such
that $f_k^{(1)}(\bc,\bx) \equiv f_k^{(1)}(\ba_d,\bx)$ for all $0 \le k \le d$, and thus by Lemma~\ref{lem:mainshiftlemma}
we must have $\ba_d \in \cS_{f,g}$. This domino effect lies at the crux of the proof of correctness of our algorithm.

\subsection{Formal Description and Proof of Correctness}\label{subsec:mainproof}

For simplicity, we will describe the algorithm receiving the input already preprocessed. 


\begin{algorithm}[H]
	\caption{Main Algorithm} \label{alg:main}
 	\SetAlgoLined
	\KwIn{black-boxes (or white-boxes) for polynomials $f \in \cM_1 , g \in \cM_2$, and degree
	of $f$, which we denote by $d$.}
	\KwOut{a non-zero shift in $\cS_{f,g}$, if one exists, or FAIL, if $\cS_{f,g} = \emptyset$.}
	By interpolation, obtain black-box access to the homogeneous components $H^0(f), H^1(f), \ldots, H^d(f)$ and
	$H^0(g), H^1(g), \ldots, H^d(g)$. \\
	Set $\ba_0 \gets 0$. \\
 	\For{k = $1, \ldots, d$}{
		Solve, via any appropriate lemma from subsections~\ref{sec:poleq-wb} or~\ref{sec:poleq-bb}\footnote{If we
		are in the white-box setting, we will use Lemma~\ref{lem:poleq-wb}, if we are in the black-box setting and 
		we have a hitting set, then we will use Lemma~\ref{lem:poleq-bb}, or if we are in the black-box setting and are
		allowed randomness we will use Lemma~\ref{lem:poleq-rand}.}, 
		the linear system given by the following equations,
		where in these equations the variables are the entries of $\bb$ and we have one equation
		for each $i$ such that $d-k \le i \le d$.
		\begin{align}\label{eq:algmain}
		H^i(g(\bx)) &= H^i(f(\bx)) + f_{i+1}^{(1)}(\bb,\bx) + \sum_{j=i+2}^{d} \frac{1}{(j-i)!}
		\cdot f_j^{(j-i)}(\ba_{k-1}, \bx)
		\end{align}
		If system of equations~\eqref{eq:algmain} has no solution, \Return FAIL \\
		Else, $\ba_k \gets \bb$.
 	}
	Perform PIT on $f(\bx) - g(\bx- \ba_d)$. \\
	If $f(\bx) - g(\bx-\ba_d) \equiv 0$, \Return $\ba_d$ \\
	Else, \Return FAIL
\end{algorithm}


\paragraph{Preprocessing Stage:}

In case $\cS_{f,g} = \emptyset$, our algorithm might do meaningless computations,
but because we will verify our candidate solution, our algorithm is sure to return that no shift
exists in the end. Notice that if we have $d_f \neq d_g$, even the interpolation step that we perform in the beginning
will err when computing homogeneous components of $g$ (because we will not interpolate with the proper degree). However,
this is ok because we have the PIT step in the end, which will prevent us from returning any wrong answers that
may arise from the meaningless computations.

Hence, from now on we can, and will, assume that $\cS_{f,g} \neq \emptyset$. In particular, this implies that
we can assume that there exists $\bc \in \cS_{f,g}$ and that $d_f = d_g = d$. Thus, our algorithm will assume that
the input is given by two polynomials $f,g \in \F[\bx]$ of degree upper bounded by the degree of $f$, which we
will denote by $d$. Notice that we can also assume that $d$ is the exact degree of $f$, since from the upper bound
on the degree we can interpolate $f$ and perform PIT on each homogeneous components of $f$ (recall $\cM_1$ is
closed under homogeneous components). Then, the degree of $f$ will be the value of the highest non vanishing
homogeneous component. Thus, we will assume that $d$ is the exact degree of $f$, as opposed to an upper bound.

\paragraph{Analysis of the Algorithm in the Black-Box case, with a Hitting Set:}

\begin{proof}
Notice that if $\cS_{f,g} = \emptyset$, then even if the algorithm finishes the \textbf{for} loop,
it will return FAIL,  since PIT on $f(\bx) - g(\bx-\ba_d)$ will return that the polynomial is non-zero. Therefore,
we never err in this case.
Hence, for the  rest of the analysis, let us assume that $\cS_{f,g} \neq \emptyset$. This implies that there
exists a shift $\bc \in \cS_{f,g}$.
Since $g(\bx)\equiv f(\bx+\bc)$ it holds that $ H^i(g(\bx)) \equiv  H^i(f(\bx+\bc))$.
From Lemma~\ref{lem:taylor}, we have
\[ H^i(f(\bx+\bc)) \equiv \sum_{j=i}^d \frac{1}{(j-i)!} \cdot  f_j^{(j-i)}(\bc, \bx)  \equiv H^i(f(\bx)) +
f_{i+1}^{(1)}(\bc,\bx) + \sum_{j=i+2}^d \frac{1}{(j-i)!} \cdot  f_j^{(j-i)}(\bc, \bx)  .\]
Hence,
\begin{equation}\label{eq:simp}
 H^i(g(\bx)) - H^i(f(\bx)) - \sum_{j=i+2}^d \frac{1}{(j-i)!} \cdot  f_j^{(j-i)}(\bc, \bx) \equiv
f_{i+1}^{(1)}(\bc,\bx)
\end{equation}
for all $0 \le i \le d$.

Recall, that by Lemma~\ref{lem:mainshiftlemma}, to find a shift in $\cS_{f,g}$ it is enough to
find a shift $\bb$ such that $f_i^{(1)}(\bb, \bx) \equiv f_i^{(1)}(\bc, \bx)$ for all $0 \le i \le d$.

Observation~\ref{obs:difshifts} implies that if $f_i^{(1)}(\bb, \bx) \equiv f_i^{(1)}(\bc, \bx)$
for some $0 \le i \le d$, then $f_i^{(r)}(\bb, \bx) \equiv f_i^{(r)}(\bc, \bx)$ for all $r \ge 1$, for this particular $i$.
Therefore, if we show that our algorithm maintains the invariant
\begin{equation}\label{eq:inv}
f_i^{(1)}(\ba_k, \bx) \equiv f_i^{(1)}(\bc, \bx), \ \ \   \text{for all } i \st d-k+1 \le i\le d
\end{equation}
at every iteration of the loop,  then at the end of the loop
we will have $f_i^{(1)}(\ba_d, \bx) \equiv f_i^{(1)}(\bc, \bx)$ for all $0 \le i \le d$, which
by Lemma~\ref{lem:mainshiftlemma} implies that $\ba_d \in \cS_{f,g}$. Thus, it is enough to show that
our algorithm preserves invariant \eqref{eq:inv}.

For $k=1$, we need to solve equations $H^{d-1}(g(\bx)) \equiv H^{d-1}(f(\bx)) + f_{d}^{(1)}(\bb,\bx)$
and $H^{d}(g(\bx)) \equiv H^{d}(f(\bx))$. Notice that equation $H^{d}(g(\bx)) \equiv H^{d}(f(\bx))$ is always
true, due to the assumptions we are making about our input after preprocessing. Therefore, we will not mention
this equation anymore and the only relevant polynomial equation to solve in this case is
$H^{d-1}(g(\bx)) \equiv H^{d-1}(f(\bx)) + f_{d}^{(1)}(\bb,\bx)$.

By identity~\eqref{eq:simp} we have that $H^{d-1}(g(\bx)) - H^{d-1}(f(\bx)) \equiv f_d^{(1)}(\bc,\bx)$, which is
a directional derivative of $f$ and therefore is an element of $\cM_1$. Notice that, for each $\ba \in \F^n$,
\[  \ds\sum_{j=1}^n a_j \cdot \frac{\D H^d(f(\bx))}{\D x_j} \equiv f_d^{(1)}(\ba, \bx) \in \cM_1 \]
which implies that we have PIT for linear combinations of the partial derivatives of $H^d(f(\bx))$.
Thus, by solving the polynomial equation $H^{d-1}(g(\bx)) - H^{d-1}(f(\bx)) \equiv f_d^{(1)}(\bb,\bx)$ on the variables
$\bb$, using Lemma~\ref{lem:poleq-bb}, we get a solution $\ba_1$ such that
$H^{d-1}(g(\bx)) \equiv H^{d-1}(f(\bx)) + f_{d}^{(1)}(\ba_1,\bx)$ (since we know $\cS_{f,g} \neq \emptyset$). Notice that
we can use Lemma~\ref{lem:poleq-bb} because we have black-box access to all polynomials in the equation.
Hence, we have a solution $\ba_1$ such that
\[ f_{d}^{(1)}(\ba_1,\bx) \equiv H^{d-1}(g(\bx)) - H^{d-1}(f(\bx))  \equiv  f_{d}^{(1)}(\bc,\bx) \]
and hence, our invariant~\eqref{eq:inv} holds true in the first case.

Now, assume that our invariant is true for $\ba_{k-1}$, $k \ge 2$.
At the $k^{th}$ iteration, equations~\eqref{eq:algmain} are equivalent to
\begin{align*}
	H^i(g(\bx)) &= H^i(f(\bx)) + f_{i+1}^{(1)}(\bb,\bx) + \sum_{j=i+2}^{d} \frac{1}{(j-i)!}
		\cdot f_j^{(j-i)}(\ba_{k-1}, \bx) \\
		&\equiv H^i(f(\bx)) + f_{i+1}^{(1)}(\bb,\bx) + \sum_{j=i+2}^{d} \frac{1}{(j-i)!}
		\cdot f_j^{(j-i)}(\bc, \bx) & \forall d-k \le i \le d.
\end{align*}

Where in the last equality we used the fact that our invariant holds for $\ba_{k-1}$ together with
Observation~\ref{obs:difshifts}. These equations, together with equation~\eqref{eq:simp} imply:

\begin{align*}
	f_{i+1}^{(1)}(\bb,\bx) &\equiv H^i(g(\bx)) - H^i(f(\bx)) - \sum_{j=i+2}^{d} \frac{1}{(j-i)!}
		\cdot f_j^{(j-i)}(\ba_{k-1}, \bx) \\
		&\equiv f_{i+1}^{(1)}(\bc,\bx) , \ \text{ for all } d-k \le i \le d
\end{align*}
In other words, for all $d-k \le i \le d$
$$\sum_{\ell=1}^n b_\ell \cdot \frac{\D f_{i+1}}{\D x_\ell}(\bx) = f_{i+1}^{(1)}(\bb,\bx) \equiv f_{i+1}^{(1)}(\bc,\bx)$$

\ignore{Notice that  $ f_{i+1}^{(1)}(\bb,\bx)$ belongs to $\cM_1$ as a directional derivative of some homogeneous component of
a polynomial computed in $\cM_1$. Clearly, $g^{(1)}_{i+1}(\bx)\in \cM_2$, and therefore using the hitting set for
polynomials in $\cM_1+\cM_2$ we can use  Lemma~\ref{lem:poleq-bb} to solve this system of equations in $\bb$.
Thus, we can solve the system of polynomial equations~\eqref{eq:algmain}.
Hence, we have found a vector $\bb$ such that
$f_{i}^{(1)}(\bb,\bx) \equiv f_{i}^{(1)}(\bc,\bx), \ \text{ for all } d-k+1 \le i \le d$. Since we set $\ba_k \gets \bb$, we
have that our algorithm maintains invariant~\eqref{eq:inv} at iteration $k$.}

Notice that both sides of each of the equations above belong to the circuit class $\cM_1$,
as both sides are first-order directional derivatives of $H^{i+1}(f(\bx))$.
Since we have black-box access to both sides of the
equations above, Lemma~\ref{lem:poleq-bb} and PIT for $\cM_1$ imply that we can solve
the system of polynomial equations~\eqref{eq:algmain}.

Thus, since the invariant is maintained until the end, we must have that
$\ba_d$ is such that $f_{i}^{(1)}(\ba_d,\bx) \equiv f_{i}^{(1)}(\bc,\bx), \text{ for all } 0 \le i \le d$,
for some $\bc \in \cS_{f,g}$. By Lemma~\ref{lem:mainshiftlemma} we must have that $\ba_d \in \cS_{f,g}$.

\paragraph{Runtime Analysis:}

Notice that we iterate through the loop $d$ times and at each iteration we solve a linear system of at
most $d \cdot |\cH_1|$ equations in $n$ variables, where $\cH_1 \subset {\F}$ is a hitting set for the
circuit class $\cM_1$. After exiting the loop, we only need to perform PIT on $f(\bx) - g(\bx-\ba_d)$, which we
assumed it takes polynomial time, for we have PIT for polynomials of the form $f-g$, where $f \in \cM_1$
and $g \in \cM_2$. Hence, the total running time is polynomial in the size of the input.

\end{proof}

\paragraph{Analysis in the Randomized Case:} The randomized case is analogous to the deterministic black-box case. 
Whenever we need to perform PIT in our main algorithm, we will use Lemma~\ref{lem:sz}. Whenever we need to solve 
a system of polynomial equations given black box access to the polynomials in question, we will use 
Lemma~\ref{lem:poleq-rand} (when we are allowed randomness), instead of Lemma~\ref{lem:poleq-bb} 
(which handles the case when we have a hitting set). 

We need to solve $d$ systems of polynomial equations and we perform the PIT algorithm as in 
Lemma~\ref{lem:sz} $O(d)$ times. Hence, by setting the error parameter each time as $\veps/d^2$ and by a 
union bound, our algorithm will err with probability at most $\veps$. Since the amount of randomness that
we need to solve a polynomial system or to perform PIT is polynomial in the logarithm of the error parameter, this gives us 
the desired running time as claimed in Theorem~\ref{thm:main-rand-int}.

\paragraph{Analysis in the White-Box Case:} Notice that by Theorem~\ref{thm:strassen} and
Corollary~\ref{cor:circuit-hom}, given access to circuits computing $f, g$ implies that we also have access to
circuits computing the polynomials $f_\ell^{(r)}(\ba, \bx)$ and $H^\ell(g)$.
Thus, we also have white-box access to linear combinations of $m = \max(n, d)$ of these polynomials.

After the step above, the white-box case is analogous to the deterministic case. Whenever we need to perform
PIT in our main algorithm, we will use the appropriate PIT algorithm for the white box class that we are considering. For instance,
whenever the algorithm above uses PIT for the class $\cM_1$, we will use the white-box algorithm, and the same happens
for the other classes. In addition, whenever we need to solve a linear system of polynomial equations, we will use the method in
section~\ref{sec:poleq-wb} to solve our system.
Thus, the same argument as the one given above for the deterministic black-box case will go through, even for the
preprocessing stage, and therefore we are done.


\section{Conclusion and Open Questions}\label{sec:conclusion}

In this paper, we reduced the problem of shift-equivalence to the problem of solving PIT, and as a consequence of this
reduction we obtained a polynomial-time randomized algorithm for the shift-equivalence problem, over characteristic zero or when the
characteristic of the base field is larger than the degrees of the polynomials.

We gave some examples for classes of circuits where this can be performed deterministically in quasi-polynomial time. One example where we ``almost'' have such a result is when testing whether a given sparse polynomial is equivalent to a shift of another sparse polynomial. Note that while the class of sparse polynomials is closed under partial derivatives and homogeneous components, it is not closed under shifts and so we cannot use our approach. Nevertheless, it is quite likely that this simple case can be solved using other techniques. 


\section*{Acknowledgment}

The authors would like to thank an anonymous reader for the remark on the usage of Carlini's lemma and of Kayal's 
implicit approach to give the alternative solution to the SET problem in the randomized case.

The third author would like to thank Michael Forbes, Ankit Gupta, Elad Haramaty, Swastik Kopparty, 
Ramprasad Saptharishi and Shubhangi Saraf for helpful discussions on related problems.


\bibliographystyle{alpha}

\bibliography{shifts-eq-pit}

\appendix

\section{Alternative Randomized Algorithm for SET}\label{sec:altmain}

In this section we give the alternative algorithm using Carlini's lemma and Kayal's approach in 
subsection~\ref{subsec:altproof}. In addition, we state Carlini's theorem
(as in \cite[Lemma 17]{kayal12}) for completeness.

\begin{lem}\label{lem:carlini}
	Given a polynomial $f(\bx) \in \F[\bx]$ with $m$ essential variables, we can compute in 
	randomized polynomial time an invertible linear transformation $A : \F^{(n \times n)^*}$ 
	such that $f(A\bx)$ depends on the first $m$ variables only.
\end{lem}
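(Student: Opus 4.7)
My plan is to prove Lemma~\ref{lem:carlini} by computing a basis of the stabilizer $\cS_f$ and using it to build $A$. The underlying observation is that the number of essential variables of $f$ equals $m = n - \dim \cS_f$: if $A$ is invertible and $f(A\bx)$ depends only on $x_1,\ldots,x_m$, then every vector of the form $(0,\ldots,0,v_{m+1},\ldots,v_n)$ lies in $\cS_{f(A\bx)}=A^{-1}\cS_f$, giving $\dim\cS_f \geq n-m$; conversely, if $\bv_{m+1},\ldots,\bv_n$ is a basis of $\cS_f$ extended to a basis $\bv_1,\ldots,\bv_n$ of $\F^n$, then letting $A$ be the matrix whose $j$-th column is $\bv_j$, shifting $x_j$ for $j>m$ translates the argument of $f$ by an element of $\cS_f$, so $f(A\bx)$ is independent of $x_{m+1},\ldots,x_n$.

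Hence the task reduces to computing $\cS_f$. By Lemma~\ref{lem:kernelshift}, $\bv\in\cS_f$ if and only if the first-order directional derivative $f^{(1)}(\bv,\bx)=\sum_{j=1}^{n} v_j \frac{\D f}{\D x_j}(\bx)$ vanishes identically; so $\cS_f$ is the kernel of the linear map $\phi:\F^n\to\F[\bx]$ defined by $\bv\mapsto \phi(\bv)=f^{(1)}(\bv,\bx)$. I will compute this kernel by randomized evaluation. Concretely, sample $N=n$ points $\ba_1,\ldots,\ba_N$ independently and uniformly from $T^n$, for a set $T\subseteq\F$ of size $\poly(n,d,1/\veps)$; form the $N\times n$ matrix $M$ with entries $M_{ij}=\frac{\D f}{\D x_j}(\ba_i)$; and output a basis of $\mathrm{Null}(M)$ as the candidate basis for $\cS_f$. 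Each entry $M_{ij}$ is computed from black-box access to $f$ by evaluating $f(\ba_i+t\be_j)$ at $d+1$ distinct values of $t$ and interpolating to extract the coefficient of $t$; this is valid because $\char(\F)>d$.

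The correctness analysis rests on showing $\rank(M)=m$ with probability at least $1-\veps$. Since $M\bv=\bigl(\phi(\bv)(\ba_i)\bigr)_{i}$, we always have $\cS_f\subseteq\mathrm{Null}(M)$; equal dimension $n-m$ would then force equality. Writing each partial derivative in terms of a basis $h_1,\ldots,h_m$ of the image of $\phi$ as $\frac{\D f}{\D x_j}=\sum_{k=1}^{m} c_{jk}h_k$ yields the factorization $M=H C^{\top}$, where $H_{ik}=h_k(\ba_i)$ and $C\in\F^{n\times m}$ has rank $m$; therefore $\rank(M)=\rank(H)$. Since $h_1,\ldots,h_m$ are $\F$-linearly independent polynomials of degree at most $d-1$, the determinant $D(\ba_1,\ldots,\ba_m)=\det\bigl([h_k(\ba_l)]_{k,l\in[m]}\bigr)$ is a nonzero polynomial in its $mn$ arguments of total degree at most $m(d-1)$, so by Lemma~\ref{lem:sz} it is nonzero on our random sample with probability at least $1-m(d-1)/|T|\geq 1-\veps$; whenever this happens, the first $m$ rows of $H$ are $\F$-linearly independent and $\rank(M)=m$.

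The main obstacle is the probabilistic rank argument: one must identify the right determinantal polynomial and bound its degree to invoke Lemma~\ref{lem:sz} with a sharp enough error. The remaining parts---extending the basis of $\mathrm{Null}(M)$ to a basis of $\F^n$ and assembling $A$ from the columns---are standard linear algebra and run in polynomial time.
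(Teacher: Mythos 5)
Your proof is correct. Note first that the paper does not actually prove Lemma~\ref{lem:carlini}; it is stated ``for completeness'' and attributed to Carlini \cite{car06} and Kayal \cite[Lemma 17]{kayal12}, so there is no in-paper argument to compare against. Your reconstruction is essentially the standard one: identify the number of essential variables with $n - \dim \cS_f$, use Lemma~\ref{lem:kernelshift} to recognize $\cS_f$ as the kernel of the linear map $\bv \mapsto f^{(1)}(\bv,\bx) = \sum_j v_j\,\partial f/\partial x_j$, compute that kernel by random evaluation, and then build $A$ by extending a basis of $\cS_f$ to a basis of $\F^n$.

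The two places that carry the weight are handled correctly. The equality $m = n - \dim\cS_f$ is argued in both directions, and the ``converse'' direction implicitly uses (via Lemma~\ref{lem:kernelshift}, and hence the standing assumption $\char(\F)=0$ or $\char(\F) > d$) that $\partial g/\partial x_j \equiv 0$ forces $g$ to be a polynomial in the remaining variables; this is the right place where the characteristic hypothesis enters, not the univariate interpolation step you tag it to. The rank argument $\rank(M) = m$ w.h.p. is also sound: the factorization $M = HC^{\top}$ with $\rank(C)=m$ gives $\rank(M)=\rank(H)$ (since $C^{\top}$ has full row rank, the column space of $HC^{\top}$ equals that of $H$), and the Schwartz--Zippel bound on the $m\times m$ minor determinant $D$ finishes it. The only step you assert without proof is that $D$ is a nonzero polynomial when $h_1,\dots,h_m$ are linearly independent; this is a standard fact (prove it by induction on $m$, expanding along the last column and using that no nontrivial linear combination of the $h_k$ vanishes), but it would be cleaner to say a word about it, since it is the crux of the probabilistic analysis. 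Everything else (extending the nullspace basis, assembling $A$ column-wise with the $\cS_f$ basis in positions $m+1,\dots,n$) is routine and correct.
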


\subsection{Formal Description}\label{subsec:altproof}

For simplicity, we will describe the algorithm receiving the input already preprocessed, where preprocessing is
done in the same way as in algorithm~\ref{alg:main}. Since the proof of correctness is analogous to the one in
section~\ref{sec:main}, we will not write the proof here.

\begin{algorithm}[H]
	\caption{Alternative Algorithm} \label{alg:alt}
 	\SetAlgoLined
	\KwIn{black-boxes (or white-boxes) for polynomials $f \in \cM_1 , g \in \cM_2$, and degree
	of $f$, which we denote by $d$.}
	\KwOut{a non-zero shift in $\cS_{f,g}$, if one exists, or FAIL, if $\cS_{f,g} = \emptyset$.}
	By interpolation, obtain black-box access to the homogeneous components $H^{d-1}(f), H^d(f)$ and
	$H^{d-1}(g), H^d(g)$. \\
	Find, via lemma~\ref{lem:carlini}, an invertible $n \times n$ matrix $A$ such that
	$H^d(f(A\bx))$ depends only on its essential variables (w.l.o.g., suppose that they are $x_1, \ldots, x_m$).
	Then, using Lemma~\ref{lem:poleq-rand}, solve the following system of equations, 
	where in these equations the variables are the entries of $\bb$:
	\begin{align*}
		H^{d}(g(A\bx)) &= H^{d}(f(A\bx)) \\
		H^{d-1}(g(A\bx)) &= H^{d-1}(f(A\bx)) + \sum_{k=1}^{n} b_k \cdot \frac{\partial H^d(f(A\bx))}{\partial x_k} \\
		b_k &= 0, \ \forall k > m. 
	\end{align*}
	If the system of equations above has no solution, \Return FAIL \\
	Else, proceed as follows: \\
	$f_1(\bx) \gets f(A\bx + \bb) - H^d(f(A\bx + \bb))$, \\ 
	$g_1(\bx) \gets g(A\bx) - H^d(f(A\bx + \bb))$, \\
	Recurse on this algorithm with input polynomials $f_1(\bx)$ and $g_1(\bx)$, and degree $d-1$. \\
	If the recursion returns FAIL, then \Return FAIL. \\
	Else, if recursion returns a shift $\bc$ such that $(c_1, \ldots, c_m) \neq (0, \ldots, 0)$, \Return FAIL. \\
	Else, take the shift $\bc$ and set $b_k = c_k$ for all $k > m$. \\
	Perform randomized PIT on $f(A\bx+\bb) - g(A\bx)$. \\
	If $f(A\bx+\bb) - g(A\bx) \equiv 0$, \Return $A^{-1}\bb$ \\
	Else, \Return FAIL
\end{algorithm}

\end{document}